\newcommand{\conv}{\mathop{\mathrm{conv}}}
\newcommand{\nnegrk}{\mathop{\mathrm{rank}_+}} 
\newcommand{\xc}{\mathop{\mathrm{xc}}} 
\newcommand{\NP}{{\textsc{NP}}}
\newcommand{\PP}{{\textsc{P}}}
\newcommand{\Ppoly}{{\textsc{P/poly}}}
\newcommand{\A}{{\mathtt{A}}}
\newcommand{\I}{{\mathrm{I}}}
\newcommand{\RR}{{\mathbb{R}}}
\newcommand{\EP}{\mathrm{EP}}
\newcommand{\PM}{\mathrm{PM}}
\DeclareMathOperator{\CH}{CH}
\journal{Discrete Applied Mathematics}
\begin{document}
\begin{frontmatter}
\title{Polynomial size linear programs for problems
in $\PP$}

\author[mu]{David Avis\corref{cor1}}
\ead{avis@cs.mcgill.ca}
\cortext[cor1]{Corresponding author}

\author[nu]{David Bremner}
  \ead{bremner@unb.ca}

\author[cu]{Hans Raj Tiwary}
  \ead{hansraj@kam.mff.cuni.cz}

\author[tu]{Osamu Watanabe}
  \ead{watanabe@is.titech.ac.jp}

\address[mu]{GERAD and School of Computer Science, McGill University
   and Graduate School of Informatics,
   Kyoto University}

\address[nu]{Faculty of Computer Science, University of New Brunswick}

\address[cu]{KAM/ITI,
 Charles University}

\address[tu]{Department of Mathematical and Computing Sciences, Tokyo Institute of Technology}

\bibliographystyle{abbrvnat}

\theoremstyle{plain}
\newtheorem{theorem}{Theorem}
\newtheorem{corollary}{Corollary}
\newtheorem{lemma}{Lemma}
\newtheorem{proposition}{Proposition}
\newtheorem{claim}{Claim}
\theoremstyle{definition}
\newtheorem{definition}{Definition}
\newtheorem{example}{Example}
\theoremstyle{remark}
\newtheorem{remark}{Remark}

\BAnewcolumntype{s}{>{$\small$}c}

\newlength{\normalparindent}
\setlength{\normalparindent}{\parindent}
\newlength{\normalparskip}
\setlength{\normalparskip}{\parskip}


\begin{abstract}

A perfect matching in an undirected graph $G=(V,E)$ is a set of vertex disjoint edges from $E$ that include
all vertices in $V$. The perfect matching problem is to decide if $G$ has such a matching.
Recently Rothvo{\ss} proved the striking
result that the Edmonds' matching polytope has exponential extension complexity.
In this paper for each $n=|V|$ we describe
a polytope for the perfect matching problem
that is different from Edmonds' polytope and define a weaker notion of extended formulation. 
We show that the new polytope has a weak extended formulation (WEF) $Q$ of polynomial size. 
For each graph $G$ with $n$ vertices we can readily construct an 
objective function so that solving the resulting linear program
over $Q$ decides whether or not $G$ has a perfect matching. 
With this construction, a straightforward $O(n^4)$ implementation of Edmonds'
matching algorithm using $O(n^2)$ bits of space would yield a WEF $Q$ with $O(n^6 \log n)$ inequalities and variables.
The construction is uniform in the sense that, for each $n$, a single polytope is defined for
the class of all graphs with $n$ nodes.
The method extends to solve polynomial time optimization problems, such as the weighted matching problem.
In this case a logarithmic (in the weight of the optimum solution) number of optimizations are made over
the constructed WEF.

The method described in the paper involves the construction of a compiler that
converts an algorithm given in a prescribed pseudocode into a polytope.
It can therefore be used to construct a polytope for any decision problem
in $\PP$ which can be solved by a well defined algorithm. 
Compared with earlier results of Dobkin-Lipton-Reiss and Valiant our
method allows the construction of explicit linear programs directly
from algorithms written for a standard register model, without
intermediate transformations.
We  apply our results to obtain polynomial upper bounds on the
non-negative rank of certain slack matrices related to membership testing of languages in $\Ppoly$.
\end{abstract}
\begin{keyword}
Polytopes, extended formulation, extension complexity, perfect matching, linear programming, non-negative rank
\end{keyword}

\end{frontmatter}

\section{Introduction}
\label{intro}
A perfect matching in an undirected graph $G=(V,E)$ is a set of vertex disjoint edges from $E$ that include
all vertices in $V$. 
We let $n$ denote the number of vertices and assume $n$ is even throughout the paper.
The perfect matching problem is to determine if $G$ contains a perfect matching and this can
be decided in polynomial time by running Edmonds' algorithm \cite{Edmonds1965a}.
As well as this combinatorial algorithm, Edmonds also introduced a related polytope
 \cite{Edmonds1965b} which we
will call the {\em Edmonds' polytope} $\EP_n$: 
\begin{equation}
\label{CHEP}
\EP_n =\CH\{ x \in \{0,1\}^{\binom{n}{2}} : x~ \text{is the edge-vector of a perfect matching in }K_n\}
\end{equation}
where $\CH$ is the convex hull operator.

For any $S \subseteq V$ and edge $ij \in E$, we write that $ij \in \delta(S)$ whenever exactly one
of the vertices $i$ and $j$ is in~$S$.
Edmonds \cite{Edmonds1965b} proved that $\EP_n$ has the following halfspace representation:
\begin{eqnarray*}
\label{odd}
\sum_{ij \in \delta(S)} x_{ij} &\geqslant& 1,~~~~~~~~~~~~~S \subseteq V,~|S| \geqslant 3,~|S|~is~odd \\
\sum_{ ij \in \delta(i)} x_{ij} &=& 1 ~~~~~~~~~~~~~i=1,2,...,n\\
0 \leqslant &x_{ij}&  \leqslant ~ 1,~~~~~~ 1 \leqslant i < j \leqslant n
\end{eqnarray*}

This description is exponential in size.
Nevertheless, the perfect matching problem can be solved in polynomial time by solving a linear program (LP)
over this polytope. Indeed, define an objective function $c^Tx = \sum_{ 1 \leqslant i < j \leqslant n} c_{ij} x_{ij}$,
where $c_{ij} = 1$ if $ij \in E$ and $c_{ij} = 0$ otherwise. The LP is:
\begin{eqnarray}
z^* = &\max&~ c^Tx \label{LP1} \\
&s.t.&  x \in \EP_n \nonumber
\end{eqnarray}

It is easy to verify that if $G$ has a perfect matching then $z^* = n/2$ otherwise $z^* < n/2$.
Since the inequalities defining $\EP_n$ can be separated in polynomial time, the LP can be solved
in polynomial time~\cite{GLS}.

Since the perfect matching problem is in $\PP$, it seemed
possible that $\EP_n$ could be written as the projection of a polytope with a polynomial
size description. This is the topic of {\em extension complexity} (see, e.g., Fiorini et al.~\cite{FMPTW15}).
We recall the basic definitions here, referring the reader to \cite{FMPTW15} for further details.

An \emph{extended formulation} (EF) of a polytope $Q \subseteq \RR^q$
is a linear system
\begin{equation} \label{eq:EF}
E x + F y = g,\ y \geqslant \mathbf{0}
\end{equation}
in variables $(x,y) \in \RR^{q+r},$ where $E, F$ are real
matrices with $q, r$ columns respectively, and $g$ is a column vector,
such that $x \in Q$ if and only if there exists $y$ such
that \eqref{eq:EF} holds. The \emph{size} of an EF is defined as the
number of \emph{inequalities} in the system.

An \emph{extension} of the polytope $Q$ is another polytope
$Q' \subseteq \mathbb{R}^e$ such that $Q$ is the image of $Q'$ under a linear map.
We define the \emph{size} of an extension $Q'$ as the number of facets of $Q'$.
Furthermore, we define the \emph{extension complexity} of $Q$, denoted by $\xc{( Q )},$
as the minimum size of any extension of $Q.$

Rothvo{\ss} \cite{Rothvoss17} recently proved the surprising result
that $\xc{(\EP_n)}$ is exponential. Since extension complexity
seemed a promising candidate to obtain computational models that
separate problems in $\PP$ from those that are $\NP$-hard, this was a setback.
A way of strengthening
extension complexity to handle this problem was recently proposed by Avis and Tiwary \cite{AT15a}. 

Dobkin et al.~\cite{DLR79} and Valiant \cite{Valiant82} showed that linear programming 
is $\PP$-complete from which it follows that every problem in $\PP$ has an LP-formulation
of polynomial size.
We will review this result in Section \ref{sect:circuits} giving Valiant's construction.
This construction applies to $\Ppoly$, which is the class of all decision problems 
$L$ solvable by a family 
$C_n, n \ge 1$ of polynomial size Boolean circuits such
that $C_n$ solves the restriction of $L$ to inputs of length $n$.
From these circuits it is straightforward to construct a family of LPs.

The main contribution of this paper is to give a direct method to produce polynomial
size LPs
from polynomial time algorithms, not circuits.
Specifically we will construct LPs directly from a polynomial
time algorithm expressed in pseudocode that solves a decision problem.
Of course a trivial LP formulation can be obtained by first solving the decision problem for a given
input and setting $c=1$ if the answer is yes and $c=0$ otherwise.
Then solving the one dimensional LP: $\max~cx, 0 \leqslant x \leqslant 1$ solves the original problem.
To avoid such trivial LPs we limit how much work can be done in constructing the objective
function. One such limitation might be, for example, to insist that the objective function
can be computed in linear time in terms of the input size. The objective functions we
consider in this paper satisfy this condition.

For concreteness, we focus on
an explicit construction of a polynomial size LP that
can be used to solve the perfect matching problem. Firstly we describe another `natural' 
polytope, $\PM_n$, for the perfect
matching problem. Then we will introduce the notion of a weak extended formulation (WEF). Instead
of requiring projection onto $\PM_n$ we will simply require that LPs solved
over the WEF solve the original problem. The objective function used is basically just 
a $\pm 1$ encoding of the input graph. 
The approach used is quite general and can be applied to any problem in $P$ for which
an explicit algorithm is known. 
It extends to polynomial time solvable optimization problems also. However in this case
a logarithmic (in the weight of the optimum solution) number of optimizations are made over the constructed
WEF.
Note that when an EF exists {\em both} the optimization and decision problems can be solved in a single 
LP optimization. Hence a WEF is weaker since a single LP
solves only the decision problem.
We discuss this further in Section \ref{sect:conclusions}.

The paper is organized as follows. In the next section we introduce a new polytope for the perfect
matching problem and give some basic results about its facet structure.
We define the notion of weak extended formulation and state the main theorem.
In Section \ref{sect:circuits} we first give a simple example to illustrate the technique we use to build
extended formulations from boolean circuits. Then we prove the main theorem of the paper.
In Section \ref{sect:general} we generalize our method to algorithms given in pseudocode
rather than as a circuit. We show how programs written in
a simple pseudocode can be converted to WEFs. Our method is modeled on Sahni's proof of Cook's
theorem given in \cite{HS78}. Since our pseudocode is clearly strong enough to
implement Edmonds' algorithm in polynomial time, our method gives a polynomial size WEF for
the perfect matching problem.
In Section \ref{sect:rank}
we use our main theorem to show that the non-negative rank of certain matrices
is polynomially bounded above.
Finally in Section \ref{sect:conclusions} we give some concluding remarks including
a discussion of applying this technique to polynomial time {\em optimization} problems
such as the maximum weighted matching problem.

\section{Polytopes for decision problems}
\label{sect:poly}
\subsection{Another perfect matching polytope}
  \label{sect:another}

We use the notation $\mathds{1}_{t}$ to denote the $t$-dimensional vector of all ones,
dropping the subscript when it is clear from the context.
Let $n$ be an even integer and let $x$ be a binary vector of length $\binom{n}{2}$.
We let $G(x)=(V,E)$ denote the graph 
with edge incidence vector given by $x$,
let $n$ be the number of its vertices and $m=\mathds{1}^Tx$ the
number of its edges.
Furthermore, let $w_x=1$ if $G(x)$ has a perfect matching and zero otherwise.
We define the polytope $\PM_n$ as:
\begin{equation}
\label{CH}
\PM_n =\CH\{(x,w_x): x \in \{0,1\}^{\binom{n}{2}}\}.
\end{equation}

$\PM_n$ may be visualized by starting with a hypercube in dimension $\binom{n}{2}$ and embedding it in
one higher dimension with extra coordinate $w$. For vertices of the cube corresponding to graphs with
perfect matchings $w=1$ else $w=0$. It is easy to see that $\PM_n$ has precisely $2^{\binom{n}{2}}$
vertices. 
Edmonds' polytope
$\EP_n$ is closely related to $\PM_n$, in fact it defines
one of its faces.
Let
\begin{equation}
\label{EPPM}
F_n =   \PM_n \cap \{  (x,w) : \mathds{1}^T x + (1-w) n^2 = \frac{n}{2}~ \} 
\end{equation}

\begin{proposition}
$F_n$ defines a face of $\PM_n$ and in fact
\begin{equation}
F_n
= \{ (x,1) : x \in \EP_n \}.
\end{equation}
\end{proposition}
\begin{proof}
To show that $F_n$ is a face we show that
the inequality
\begin{equation}
\label{face}
\mathds{1}^T x +(1-w) n^2 \geqslant \frac{n}{2}
\end{equation}
is valid for $\PM_n$. It suffices to verify it for the extreme points
$(x,w_x)$ given in (\ref{CH}).
If $w_x=0$, (\ref{face}) holds since $\mathds{1}^T x +n^2 > \frac{n}{2}$.
Since the inequality is strict, none of these
extreme points lie on $F_n$. 
Otherwise $w_x=1$, $x$ is the incidence vector of graph containing a perfect matching, so
$\mathds{1}^Tx \geqslant n/2$. This shows that $F_n$ defines a face of $\PM_n$.

The vectors $x$ with
$w_x =1$  and $\mathds{1}^Tx=n/2$ are the incidence vectors of perfect matchings of $K_n$
and are precisely those used
to define $\EP_n$ in (\ref{CHEP}). Hence $\EP_n$ lifted by adding the
coordinate $w=1$ is precisely $F_n$.
\end{proof}

For a given input graph $G(\bar{x})=(V,E)$ we define the vector $c=(c_{ij})$ as follows:
\begin{equation}
c_{ij} = \begin{cases}
~~1 & \text{if $ij \in E$}\\
-1 & \text{otherwise}
\end{cases}
~~~~~~~~~~1 \leq i < j \leq n
\label{c}
\end{equation}
and let $d$ be a constant such that $0 < d \leqslant 1/2$.
We construct the LP:
\begin{eqnarray}
\label{LP2}
z^*  = &\max& ~ c^Tx  + d w\\
 &s.t.& (x,w) \in \PM_n \nonumber
\end{eqnarray}

For any positive integer $n$, by an $n$-cube we mean the 
$n$-dimensional hypercube whose vertices are the $2^n$
binary vectors of length $n$.

\begin{proposition}
\label{optlemma}
For any edge incidence vector $\bar{x} \in \{0,1\}^{\binom{n}{2}}$
let $m=\mathds{1}^T\bar{x}$.
The optimum solution to (\ref{LP2}) is unique, $z^* = m +d$ if $G(\bar{x})$ has a perfect matching,
and $z^* = m$ otherwise.
\end{proposition}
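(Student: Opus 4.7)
The plan is to reduce the LP (\ref{LP2}) to maximization over the vertices of $\PM_n$, which by (\ref{CH}) are exactly the $2^{\binom{n}{2}}$ points $(x, w_x)$ with $x \in \{0,1\}^{\binom{n}{2}}$, and then bound the objective vertex by vertex.

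First I would observe that for any binary $x$ with edge set $E_x = \supp(x)$, the definition of $c$ in (\ref{c}) gives
\[
c^T x \;=\; |E_x \cap E| - |E_x \setminus E| \;=\; m - |E_x \symdiff E|,
\]
where $E = E(\bar x)$ and $m = |E|$. Hence $c^T x \leqslant m$ over all binary $x$, with equality iff $E_x = E$, i.e.\ $x = \bar x$; every other binary $x$ satisfies $c^T x \leqslant m - 1$.

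Next I would split on whether $G(\bar x)$ has a perfect matching. If it does, then $(\bar x, 1)$ is a vertex of $\PM_n$ achieving $c^T \bar x + d \cdot 1 = m + d$, while every other vertex $(x, w_x)$ satisfies $c^T x + d w_x \leqslant (m-1) + d < m + d$, where the strict inequality uses $d \leqslant 1/2 < 1$. If $G(\bar x)$ has no perfect matching, then $(\bar x, 0)$ is a vertex with value $m$, and every other vertex gives at most $(m-1) + d < m$. In both cases the vertex optimum is attained at a unique vertex, which implies uniqueness of the LP optimum: a linear objective that is strictly larger at one vertex than at all others is strictly larger at that vertex than at any proper convex combination of vertices of $\PM_n$.

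There is no real obstacle here; the calculation is essentially a one-line symmetric-difference identity. The only substantive point to highlight is the role of the hypothesis $0 < d \leqslant 1/2$: it ensures that the bonus $+d$ for sitting at a perfect-matching vertex is strictly smaller than the unit penalty incurred by disagreeing with $\bar x$ on even a single edge, which is exactly what forces the optimal $x$ to equal $\bar x$ and thereby makes the LP decide the perfect matching problem on the input graph.
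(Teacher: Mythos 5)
Your proof is correct and takes essentially the same route as the paper: bound $c^Tx\leqslant m-1$ for every binary $x\neq\bar{x}$, then split on whether $G(\bar{x})$ has a perfect matching and compare the objective vertex by vertex, with $d\leqslant 1/2$ only needed in the ``no matching'' case. Your symmetric-difference identity and the explicit remark that a unique maximizing vertex beats every proper convex combination are just slightly more detailed versions of what the paper's proof does implicitly.
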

\begin{proof}
Let $c$ be the vector defined by (\ref{c}) and
set $m=\mathds{1}^T \bar{x}$.
Note that $c^T \bar{x}=m$ and that $c^T x \leqslant m-1$ for any other vertex $x$
of the $\binom{n}{2}$-cube.
If $G(\bar{x})$ has a perfect matching then $(x,w)=(\bar{x},1)$ 
is a feasible solution to (\ref{LP2}) with $z=m+d$. Since $x \neq \bar{x}$, $c^T x+dw \leqslant m - 1 + d$ and so $(\bar{x},1)$ is the unique
optimum solution.

If $G(\bar{x})$ does not have a perfect matching then $(x,w)=(\bar{x},0)$ 
is a feasible solution to (\ref{LP2}) with $z=m$. Consider any other vertex $x$ of the
$\binom{n}{2}$-cube. Then $z=c^Tx  + d w \leqslant m-1 + 1/2 = m-1/2$.
It follows that
$z^* = m$ and is obtained by the unique solution $(\bar{x},0)$.
\end{proof}

\subsection{Polytopes for decision problems and weak extended formulations}
\label{sect:wef}

The basic ideas above can be extended to arbitrary polynomial time decision problems. 
Let $X$ denote a polynomial time decision problem defined on binary input vectors $x=(x_1,...,x_q)$,
and an additional bit $w_x$, where $w_x=1$ if $x$ results in a ``yes'' answer and $w_x=0$
if $x$ results in a ``no'' answer. We define the polytope $P$ as:
\begin{equation}
\label{CH2}
P =\CH\{(x,w_x): x \in \{0,1\}^q\}
\end{equation}

For a given binary input vector $\bar{x}$ we define the vector $c=(c_{j})$ by:
\begin{equation}
\label{c2}
c_{j} = \begin{cases}
~~1 & \text{if}~\bar{x}_j=1\\
-1 & \text{if}~\bar{x}_j=0
\end{cases}
~~~~~~~~~~1 \leq i < j \leq n
\end{equation}
and let $d$ be a constant such that $0 < d \leqslant 1/2$.
As before we construct an LP:
\begin{eqnarray}
\label{LP3}
z^*  = &\max&~ c^Tx  + d w\\
&s.t.&~  (x,w) \in P \nonumber
\end{eqnarray}
The following proposition can be proved in an identical way to Proposition \ref{optlemma}.
\begin{proposition}
\label{optlemma2}
For any  $\bar{x} \in \{0,1\}^q$
let $m=\mathds{1}^T\bar{x}$.
The optimum solution to (\ref{LP3}) is unique, $z^* = m +d$ if $\bar{x}$ has a ``yes'' answer
and $z^* = m$ otherwise.
\end{proposition}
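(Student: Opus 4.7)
The plan is to reproduce the argument of Proposition \ref{optlemma} almost verbatim, since $P$ plays the same role for a generic poly-time decision problem $X$ that $\PM_n$ plays for perfect matching, and the $\pm 1$ encoding (\ref{c2}) is the exact analogue of (\ref{c}). The proof never used anything specific to graphs --- only the additive structure of the $\pm 1$ objective over a $\{0,1\}$-cube plus one extra $\{0,1\}$ coordinate --- so a direct translation should work.

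First I would verify from (\ref{c2}) that $c^\trans \bar{x} = m$: only coordinates with $\bar{x}_j = 1$ contribute, each contributing $+1$, and there are $m = \mathds{1}^\trans \bar{x}$ of them. Next I would establish the key gap estimate: for every other vertex $x \in \{0,1\}^q$ of the hypercube, $c^\trans x \leqslant m - 1$. This follows because $c_j = 2\bar{x}_j - 1$, so the map $x_j \mapsto c_j x_j$ is uniquely maximized over $\{0,1\}$ at $x_j = \bar{x}_j$ with value $\bar{x}_j$; flipping any single coordinate strictly decreases the objective by exactly $1$.

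With this, the optimization splits into two cases over the vertices $(x, w_x)$ of $P$. If $\bar{x}$ is a ``yes'' instance, the feasible point $(\bar{x}, 1)$ achieves $z = m + d$, while any other vertex $(x, w_x)$ satisfies $c^\trans x + d w_x \leqslant (m-1) + d < m + d$ since $d \leqslant 1/2 < 1$. If $\bar{x}$ is a ``no'' instance, then $w_{\bar{x}} = 0$ and the feasible point $(\bar{x}, 0)$ achieves $z = m$; any other vertex gives $c^\trans x + d w_x \leqslant (m-1) + d \leqslant m - 1/2 < m$. In both cases the inequalities are strict, so the optimum is unique and equals the claimed value.

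There is no real obstacle here: the argument is purely a translation of the earlier proof into the abstract setting, and one can simply cite Proposition \ref{optlemma} as a template. The only thing worth pointing out explicitly is that nothing in the earlier proof used either the constraint $\mathds{1}^\trans x \geqslant n/2$ on matchings or the specific interpretation of $w_x$, so the same three observations (the value at $\bar{x}$, the unit gap to every other cube vertex, and the bound $d \leqslant 1/2$) suffice verbatim.
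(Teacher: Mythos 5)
Your proposal is correct and is exactly the argument the paper intends: the paper simply states that Proposition~\ref{optlemma2} "can be proved in an identical way to Proposition~\ref{optlemma}," and your translation (value $m$ at $\bar{x}$, unit gap $c^\trans x \leqslant m-1$ at every other cube vertex, and $d \leqslant 1/2$ to keep the two cases separated and the optimum unique) reproduces that proof verbatim in the abstract setting.
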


\begin{definition}
Let $Q$ be a polytope which is a subset of the $(q+t)$-cube
with variables labeled $x_1, ...,x_q,y_1,...,y_t$.
We say that $Q$ has the {\em x-0/1 property} if each of the $2^q$
ways of assigning 0/1 to the variables $x_1, ...,x_q$ uniquely extends to
a vertex $(x,y)$ of $Q$ and, furthermore, $y$ is 0/1 valued. 
$Q$ may have additional fractional vertices.
\end{definition}
In polyhedral terms,
for every binary vector $b \in \RR^q$, the intersection of $Q$ with the
hyperplanes $x_j = b_j$
is a 0/1 vertex.
We will show that we can solve a polynomial time decision problem $X$ 
by replacing $P$ in
(\ref{LP2}) by a polytope $Q$ of polynomial size, while maintaining the same objective 
functions.
We call $Q$ a {\em weak extended formulation}
as it does not
necessarily project onto $P$.

\begin{definition}
\label{def:wef}
A polytope 
\[
Q=\{(x,w,s) : x \in [0,1]^q, w \in [0,1],
s \in [0,1]^r, Ax + bw + Cs \leqslant h\} 
\]
is a {\em weak extended formulation (WEF)} of $P$ if the following hold:
\begin{itemize}
\item
$Q$ has the $x$-0/1 property. 
\item
For any vector $\bar{x} \in \{0,1\}^q$
let $m=\mathds{1}^T\bar{x}$, let $c$ be defined by (\ref{c2})
and let $0 < d \leqslant 1/2$.
If $\bar{x}$ has a ``yes'' answer
the optimum solution of the LP
\begin{equation}
\label{optlp}
z^*=\max~\{c^Tx + dw: (x,w,s) \in Q \}
\end{equation}
is unique and takes the value $z^* = m +d$.
Otherwise $z^* < m+d$.
\end{itemize}
\end{definition}
\noindent
The first condition states that any vertex of $Q$ that has 0/1 values for the
$x$ variables has 0/1 values for the other variables as well. The second
condition connects $Q$ to $P$. For a 0/1 valued vertex $(x,w,s)$ of $Q$
we have $w=1$ if $x$ encodes a ``yes'' answer since $z^*=m+d$. If $x$
encodes a ``no'' answer then $z^* < m+d$ and we must have $w=0$.
The purpose of the coefficient $d$ is so that we can distinguish the two
answers by simply observing the value of $z^*$.

In the sequel we will be concerned with small
positive constants which we will assume are rational and in
reduced form $d=p/q$, for positive integers $p$ and $q$.
The {\em size} of $d$ is $\lceil \log_2 p \rceil  + \lceil \log_2 q \rceil$.

In general $Q$ will have fractional vertices and that is why
the condition for ``no'' answers differs from that given in
Proposition \ref{optlemma2} (an example is given below).
However, for small enough $d$ we can ensure that the LP optimum solution is
unique in both cases and corresponds to that given in
Proposition \ref{optlemma2}.
\begin{proposition}
\label{aboutd}
Let $Q$ be a WEF of $P$. There is a positive constant $d_0$, 
whose size is polynomial in the size of $Q$,
such that for all $d$, $0<d < d_0$, the optimal solution of the LP 
defined in (\ref{optlp})
is unique, $z^* = m +d$ if $\bar{x}$ has a ``yes'' answer
and $z^* = m$ if $\bar{x}$ has a ``no'' answer. 
\end{proposition}
\begin{proof}
The part of the proposition relating to the ``yes'' answers
is already covered by Definition \ref{def:wef}. So consider any 
$\bar{x}$ corresponding to a ``no'' answer and its associated vector $c$
given by (\ref{c2}).
Since $Q$ has the 0/1 property $\bar{x}$ uniquely extends to
a 0/1 valued vertex $(\bar{x},\bar{w},\bar{s})$ of $Q$ with
$\bar{w}=0$ as we saw above.
So for any $d$, $0 < d \le 1/2$, we have $c^T \bar{x} +d\bar{w} = c^T \bar{x} =m$.
Consider any other vertex $(x,w,s)$ of $Q$.
Since $x \neq \bar{x}$, we have
$c^T x < m$. 
Let $d({\bar{x}}) = \min \{m - c^T x \}$ 
taken over all
vertices $(x,w,s)$ with $x \neq \bar{x}$.
This minimum exists, since the number of vertices of $Q$ is finite,
and is positive.
Furthermore,
$c^Tx + d({\bar{x}}) w \le c^Tx + d({\bar{x}}) \le m$.
We define
\begin{equation}
\nonumber
d_0 = \min \{ d(y)~:~ y \in \{0,1\}^q~\text{has a ``no'' answer} \}.
\end{equation}
Now choose any $d$, $0 < d < d_0 $, and any vertex $(x,w,s)$ 
of $Q$ with $x \neq \bar{x}$. If $w>0$ then
$c^Tx + dw < c^Tx + d_0 w \le  c^Tx + d_0 \le m$.
Otherwise $w=0$ and we have $c^Tx + dw = c^Tx < m$ as noted above.
Since the optimal solution for any LP can always be obtained at a vertex,
it follows that the LP defined in (\ref{optlp}) has a unique
optimum solution $z^* = m$ whenever $\bar{x}$ corresponds to a ``no'' answer.

If follows from standard linear programming theory that each vertex $(x,w,s)\in Q$ can
be represented by a number of bits polynomial in the size of $Q$. Since $c$ is a $\pm 1$ vector
and $m \le n$
it follows that $m - c^T x$ can also be so represented. Now $d_0$ is the minimum of
a finite number of such quantities so has size polynomial in the size of $Q$.
The proposition follows.
\end{proof}
This proof gives, at least in principle, a method of computing the constant $d_0$ from
a list of the vertices of $Q$. An easier to compute bound can be obtained from the analysis
of Khachian's ellipsoid
method, see, for example, the text \cite{GLS}. Following this text, let $\langle{}Q\rangle{}$
be the number of bits required to represent $Q$. Then the proof of their Lemma 3.1.33 
implies that the differences $m - c^T x$ in the above proof are all at least $2^{-\langle{}Q\rangle{}}$
and so we could choose this value for $d$. This value will generally be much too small
to be of practical use. 
However, if we are able to observe the {\em value} of $w$ in the optimum solution of (\ref{optlp}) then
we may in fact set $d=0$. In this case 
$z^* = m$ for both answers and it follows from the 0/1 property that the optimum solution is unique
and 0/1 valued. Since $Q$ is a WEF of $P$
the value of $w$ in the optimum solution gives the correct answer.
This is the preferred method in practice as it reduces the problem of 
floating point round off errors
which may be caused by small values of $d$.

We saw in the
proof of Proposition \ref{aboutd}, the reason
that $d$ has to be small is to ensure that fractional vertices
of $Q$
do not lead to optimum solutions for $\pm 1$ valued object
vectors $c$.
As an example, consider the perfect matching problem and the
polytope $\PM_n$ defined earlier.
Suppose $Q_n$ is a WEF for $\PM_n$.
It follows from Propositions \ref{optlemma}
and \ref{aboutd} that
we can determine whether an input graph $G$ has a perfect matching by solving
an LP over either $\PM_n$ or $Q_n$ using the same objective function which
is derived directly from the edge adjacency vector of $G$.
As a very simple case, consider $n=2$ giving $\PM_2 = \CH\{(0,0),(1,1)\}$. 
A possible WEF
with $r=0$ is
given by: 
\[
Q_2= \CH\{(0,0,0),(1,1,1),(1/4,1,1/2)\}
\]
Initially let $d=1/2$.
When $G(\bar{x})$ is an edge, $m=1$, $c_{12}=1$ and $z=c^Tx + dw$
obtains the same
optimum solution of $z^*=3/2=m+d$ over both $\PM_2$ and $Q_2$.
When $G(\bar{x})$ is a non-edge, $m=0$, $c_{12}=-1$ and $z=c^Tx + dw$ 
obtains the optimum solution of $z^* = 0=m$ over $\PM_2$ and $z^* = 1/4<1/2=m+d$ over $Q_2$, at the
fractional vertex (1/4,1,1/2). However, if $0 <d < 1/4$ then $z=c^Tx + dw$
is negative for the fractional vertex and
obtains the unique optimum solution of $z^* = 0=m$ over $Q_2$ and also, of course,
over $\PM_2$.
We see that $Q_2$ projects onto a triangle in the $(x,w)$-space, whereas $\PM_2$
is a line segment. 

In the next section we 
prove the following result:
\begin{theorem}
\label{main_thm}
Every decision problem $X$ in $\Ppoly$ admits a weak extended formulation
$Q$ of polynomial
size.
\end{theorem}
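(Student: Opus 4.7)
The plan is to exploit the fact that any problem $X \in \mathbf{P/poly}$ is decided by a polynomial-size family of boolean circuits $\{C_q\}_{q \geqslant 1}$, and to build a WEF directly from the circuit by introducing one variable per wire and a constant number of inequalities per gate. This is a polytope-level realisation of the \textbf{P}-completeness reductions of Dobkin--Lipton--Reiss and Valiant recalled in the introduction, tailored to the WEF framework of Definition \ref{def:wef}.

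\textbf{Construction.} Fix input length $q$ and let $C_q$ have $N = \mathrm{poly}(q)$ internal gates. Introduce input variables $x_1, \ldots, x_q$, a slack variable $s_g$ for each internal gate $g$, and an output variable $w$, all constrained to $[0,1]$. For each gate append the standard linearization, writing $a,b$ for its input variables and $y$ for its output variable: a NOT gate contributes $y = 1-a$; an AND gate contributes $y \geqslant 0,\ y \leqslant a,\ y \leqslant b,\ y \geqslant a+b-1$; an OR gate contributes $y \leqslant 1,\ y \geqslant a,\ y \geqslant b,\ y \leqslant a+b$. Each gadget has the textbook property that when its inputs lie in $\{0,1\}$ the output is uniquely forced to the correct 0/1 value. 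Call the resulting polytope $Q$; it has $q+N+1$ variables and $O(N) = \mathrm{poly}(q)$ inequalities.

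\textbf{Verification of the WEF clauses.} Any fixed $x \in \{0,1\}^q$ propagates through the circuit and uniquely determines each $s_g$ and $w$ as 0/1 values with $w = C_q(x)$, giving the $x$-0/1 property. For the optimization clause, let $\bar x \in \{0,1\}^q$ be an input, let $c$ be the objective defined by (\ref{c2}), and set $m = \mathds{1}^T \bar x$. Over the box $[0,1]^q$ the functional $c^T x$ attains its maximum $m$ uniquely at $x = \bar x$, so every $(x,w,s) \in Q$ satisfies $c^T x + dw \leqslant m + d$. If $\bar x$ is a yes-instance, the unique point of $Q$ lying above $\bar x$ has $w=1$ and realises the bound, giving a unique $z^* = m+d$. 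If $\bar x$ is a no-instance, that point attains $z=m$, whereas equality $c^T x + dw = m+d$ elsewhere would require $x = \bar x$ and $w=1$ simultaneously, contradicting the $x$-0/1 property; hence $z^* < m+d$. Taking $d$ smaller than $\min_v (m - c^T x(v))$ over the finitely many vertices $v$ of $Q$ with $x(v) \neq \bar x$ (a strictly positive quantity) then excludes every fractional vertex from beating the 0/1 point above $\bar x$, yielding $z^* = m$ uniquely for all sufficiently small $d$.

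\textbf{Main obstacle.} The delicate case is the no-instance one: the gate linearizations do introduce fractional vertices, as already visible in the $n=2$ example with $(1/4,1,1/2)$, and one must ensure that none of them outscores the genuine 0/1 optimum. The gap argument above handles this, but it hinges essentially on the $\pm 1$ encoding in (\ref{c2}), which makes $x=\bar x$ the \emph{unique} maximiser of $c^T x$ over $[0,1]^q$ and forces a strict inequality at every other vertex; without this, one could not rule out ties from fractional solutions. A minor but worth-stating point is uniformity: a single polytope $Q$ works for every input of length $q$ because $Q$ depends only on the circuit $C_q$, the specific input $\bar x$ entering solely through the coefficients of the objective.
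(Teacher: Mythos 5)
Your proposal is correct and follows essentially the same route as the paper: it is Valiant's circuit-to-polytope construction (four inequalities per AND/OR gate, NOT handled linearly), with the $x$-0/1 property established by propagating a fixed binary input through the gates and the optimization clause verified via the $\pm 1$ objective and a choice of $d$ below the positive gap $m - c^T x(v)$ over the finitely many other vertices, exactly as in the paper's Lemmas \ref{01} and \ref{wef}. The only cosmetic differences are that the paper organizes the propagation argument as an explicit induction along a topological ordering of the gates and picks a single $d$ uniformly over all inputs $\bar x$.
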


\section{From Circuits to Polytopes}
\label{sect:circuits}

In order to show that Linear Programming is $\PP$-complete,
Valiant~\cite{Valiant82} gave a construction to transform boolean
circuits into a linear sized set of linear inequalities with the
$x$-0/1 property (where $x_i$ are the variables corresponding to the
inputs of the circuit); a similar construction was used by
Yannakakis \cite{Ya91} in the context of the Hamiltonian Circuit
problem. In this section we show that Valiant's construction implies
Theorem~\ref{main_thm}. Valiant's point of view is slightly different
from ours in that he explicitly fixes the values of the input
variables before solving an LP-feasibility problem (as opposed to
using different objective functions with a fixed set of
inequalities). Showing that the result of this fixing is a
$0/1$-vertex is precisely our $x$-0/1 property.

We begin with a standard definition, see for example the text by Arora and Barak~\cite{book:arora-barak}:

\begin{definition}
A {\em (boolean) circuit} with $q$ input bits
$x=(x_1,x_2,...,x_q)$ is a directed acyclic graph in which 
each of its $t$ nodes, called {\em gates}, is either 
an AND ($\wedge$) gate, an 
OR ($\vee$) or a NOT ($\neg$) gate. We label each gate by its output bit. 
One of these gates is designated as the {\em output gate} and gives output bit $w$. 
The {\em size} of a circuit is the number of gates it contains and its 
{\em depth} is the 
maximal length of a path from an input gate to the output gate.
\end{definition}

For example, the circuit shown in Figure \ref{cpm4} can be used to
compute whether or not a graph on 4 nodes has a perfect matching.
The input is the binary edge-vector of the graph and the output is
$w=1$ if the graph has a matching (e.g. $G_1$) or $w=0$ if it does not
(e.g. $G_2$). If the graph has a perfect matching, exactly one of
$y_{12}, y_{13}$ or $y_{14}$ is one, defining the matching.
For each gate we have labeled the output bit by a new variable.
We will construct a polytope from the circuit
by constructing a system of inequalities on the same variables. 

\begin{figure}[!ht] 
  \centering
    \includegraphics[width=0.5\textwidth]{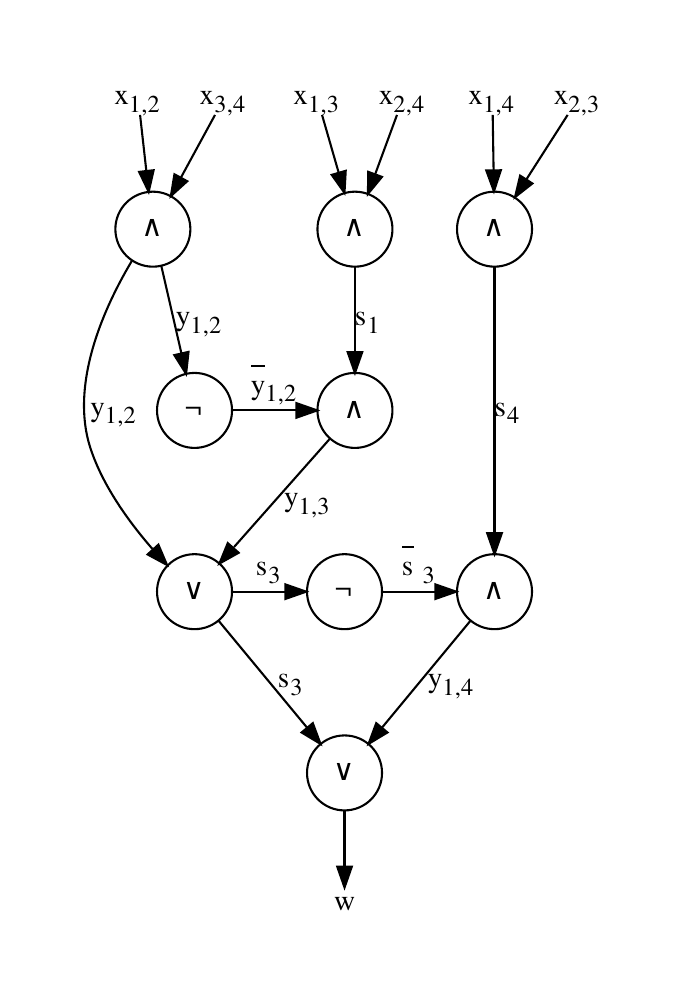}
\caption{A circuit to compute whether a  4 node graph has a perfect matching}
\label{cpm4}
\end{figure}


From an
AND gate, say $ y_{12} = x_{12} \wedge x_{34}$,
we generate the inequalities:
\begin{eqnarray}
x_{12} + x_{34} - y_{12} &\leqslant& 1 \nonumber \\
-x_{12} ~~~~~~~~ + y_{12} &\leqslant& 0  \label{AND}\\
~~~~~~~- x_{34} + y_{12} &\leqslant& 0 \nonumber \\
 y_{12} &\geqslant& 0 \nonumber
\end{eqnarray}

The system (\ref{AND}) defines a polytope in three variables whose 4 vertices
represent the truth table for the AND gate:
\begin{eqnarray*}
&x_{12}&~~ x_{34}~~~~ y_{12}  \\
&0&~~~ 0~~~~~~ 0  \\  
&0&~~~ 1~~~~~~ 0  \\
&1&~~~ 0~~~~~~ 0  \\ 
&1&~~~ 1~~~~~~ 1  
\end{eqnarray*}
Note that the variables $x_{12},x_{34}$ define a 2-cube and so the polytope is an
extension of the 2-cube. In the terminology of the last section, it has
the $\{x_{12},x_{34}\}$-0/1 property.

From an OR gate, say $s_3 = y_{12} \vee y_{13}$,
we generate the inequalities:

\begin{eqnarray}
-y_{12} - y_{13}  + s_3 &\leqslant& 0 \nonumber\\
y_{12} ~~~~~~~ - s_3 &\leqslant& 0 \label{OR} \\
~~~~~~ y_{13} -s_3 &\leqslant& 0 \nonumber \\
s_3 &\leqslant& 1 \nonumber
\end{eqnarray}

The system (\ref{OR}) defines a polytope in three variables whose 4 vertices
represent the truth table for the OR gate, as can easily be checked.
Indeed, this polytope 
has the $\{y_{12},y_{13}\}$-0/1 property. 

From a NOT gate, say $\bar{y}_{12} = \neg y_{12}$, we could generate the equation

\begin{equation}
\bar{y}_{12} = 1 - y_{12} \label{NOT}
\end{equation}

However it is equivalent to just replace all instances of $\bar{y}_{12}$
by $1 - y_{12}$ in the inequality system, and this is what we will do in the
sequel.

The circuit in  Figure \ref{cpm4} contains 5 AND gates and 2 OR gates.
By suitably replacing variables in (\ref{AND}) and (\ref{OR}) we obtain
a system of 28 inequalities in 13 variables. As just mentioned, the NOT
gates are handled by variable substitution rather than explicit equations. 
Let $Q_4$ denote the corresponding polytope.
It will follow by the general argument below
that $Q_4$ is 
a weak extended formulation (WEF) of $\PM_4$.


We now show that the above construction can be applied to any boolean circuit $C$ to obtain
a polytope $Q$ which has the 0/1 property with respect to the inputs of $C$.
\begin{lemma}[\cite{Valiant82}]
\label{01}
Let $C$ be a boolean circuit with $q$ input bits $x=(x_1,x_2,...,x_q)$,
$t$ gates labeled by their output bits $y=(y_1,y_2,...,y_t)$ and
with circuit output bit $w=y_t$.
Construct the polytope $Q$ with $4t$ inequalities and $q+t$ variables
using the systems (\ref{AND}) and (\ref{OR}) respectively.
$Q$ has the the $x$-0/1 property and for every input $x$ the value of $w$ computed
by $C$ corresponds to the value of $y_t$ in the unique extension $(x,y) \in Q$
of $x$.
\end{lemma}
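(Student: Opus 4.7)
The plan is to prove the lemma by induction along a topological ordering of the gates of $C$. First, I would verify the base local claim for a single gate: for each of the gate systems (\ref{AND}) and (\ref{OR}), whenever the two input variables are fixed to $\{0,1\}$ values, the inequalities force the output variable to the unique $0/1$ value dictated by the gate's truth table. This is a direct inspection of the four inequalities per gate (the check is essentially carried out in the paragraph following each system, where the four vertices of the associated three-dimensional polytope are listed). NOT gates need no treatment because they are handled purely by the substitution $\bar{y} \mapsto 1 - y$ at the symbolic level, and $y \in \{0,1\}$ iff $1-y \in \{0,1\}$.

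Next, fix any $0/1$ assignment $\bar{x}$ to $(x_1,\ldots,x_q)$ and order the gates $g_1,\ldots,g_t$ of $C$ in topological order, so that the inputs of each $g_i$ are either primary inputs or outputs of some $g_j$ with $j < i$. I would then proceed by induction on $i$: assuming that in every feasible point $(x,y) \in Q$ with $x = \bar{x}$ the coordinates $y_1,\ldots,y_{i-1}$ are already forced to the $0/1$ values that $C$ computes at gates $g_1,\ldots,g_{i-1}$ on input $\bar{x}$, the local claim applied to $g_i$ forces $y_i$ to the value computed by $C$ at $g_i$. After $t$ steps every $y_j$ is pinned to its circuit value, and in particular $y_t$ equals the circuit output $w$ on $\bar{x}$. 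Hence the intersection of $Q$ with the hyperplanes $x_j = \bar{x}_j$ is a single point with $0/1$ coordinates, which is the $x$-$0/1$ property, and this unique extension agrees with $C$ on the output coordinate.

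A small but worth-noting step is the conversion from ``uniquely extends to a point of $Q$'' to ``uniquely extends to a \emph{vertex} of $Q$''; since any point of a polytope that is already isolated by the equations $x_j = \bar{x}_j$ must lie at an extreme point of $Q$, this is automatic, but I would mention it for completeness. Counting is immediate: each of the $t$ gates contributes $4$ inequalities (NOT gates contribute none after substitution), and the variables are exactly $x_1,\ldots,x_q,y_1,\ldots,y_t$, giving $4t$ inequalities and $q+t$ variables as stated.

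I expect no real obstacle; the only subtle point is being careful that the substitution for NOT gates does not accidentally change the variable set or break the inductive invariant. This is handled by treating $\neg y_j$ as a linear expression rather than as a new variable, so the inductive step at a gate whose input is $\neg y_j$ uses $1 - y_j$ (already $0/1$ by induction) in place of an input, and the local analysis of the AND/OR systems continues to apply unchanged.
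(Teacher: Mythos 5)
Your proposal is correct and follows essentially the same route as the paper: induction along a topological ordering of the gates, with the base and inductive steps resting on the truth-table verification of the AND/OR systems and NOT handled by the substitution $\bar{y}\mapsto 1-y$. The extra remark that a point isolated by the equations $x_j=\bar{x}_j$ is necessarily a vertex of $Q$ is a harmless addition that the paper leaves implicit.
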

\begin{proof}
Since $C$ is an acyclic directed graph it contains a topological ordering of its
nodes (gates) and we can assume that the labeling $y_1,y_2,...,y_t$ is such
an ordering. Note we can assume $w=y_t$ comes last since it cannot be an input
to any other gate. For any given input $x$ the output of the circuit can
be obtained by evaluating each gate in the order $y_1, ..., y_t$. Since it
is a topological ordering, each input for a gate has been determined before
the gate is evaluated.

We proceed by induction. Let $Q_k$ be the polytope defined by the $4k$
inequalities corresponding to gates $y_1,...,y_k$. The inductive hypothesis is that for
$k=1,2,...,t$ 
\begin{itemize}
\item
$Q_k$ has the $x$-0/1 property, and 
\item
for each $x$ the value of $y_k$ calculated by $C$ corresponds
to the value of $y_k$ in the unique extension of $x$ in $Q_k$.
\end{itemize}
This is clearly true for
$k=1$ as the analysis following (\ref{AND}) and (\ref{OR}) shows. We assume
the hypothesis is true for $k=1,2,...,j$, where $1 \leqslant j < t$, and prove it for $j+1$.
Indeed, since $Q_j$ has the $x$-0/1 property for each $x$ the values of
$y_1,...,y_j$ are uniquely defined and have 0/1 values.
By induction they correspond to the values computed by $C$.
Therefore the analysis following (\ref{AND}) and (\ref{OR}) shows
that $y_{j+1}$ will also be uniquely defined, 0/1 valued, and will correspond to the
value computed by $C$. This verifies the inductive hypothesis for $j+1$
and since $Q=Q_t$ the proof is complete.
\end{proof}

\begin{lemma}
\label{wef}
Let $C$ be a circuit that solves a decision problem $X$ with $q$ input bits $x=(x_1,x_2,...,x_q)$ and has
associated polytope $P$ as defined in (\ref{CH2}). 
The polytope $Q$ constructed in Lemma \ref{01} is a WEF for $P$.
\end{lemma}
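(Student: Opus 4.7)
My plan is to verify the two conditions in Definition \ref{def:wef} using Lemma \ref{01}. The $x$-0/1 property is delivered directly by that lemma, together with an additional piece I will exploit throughout: for each $0/1$ input $\bar{x}$, the unique vertex of $Q$ with $x = \bar{x}$ has the form $(\bar{x}, w_{\bar{x}}, s_{\bar{x}})$, where $w_{\bar{x}}$ equals the circuit's output on $\bar{x}$, which is the correct ``yes/no'' bit since $C$ decides $X$. A further consequence I will use is that every fractional vertex of $Q$ must have at least one fractional $x$-coordinate: if all $x$-coordinates of a vertex were $0/1$, then the intersection of $Q$ with the corresponding hyperplanes $x_j = e_j$ would contain both that vertex and the guaranteed 0/1 extension, contradicting uniqueness.

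For the optimization condition, my workhorse is the elementary fact that, for $c$ defined by (\ref{c2}), $c^T x \leqslant m$ on $[0,1]^q$ with equality if and only if $x = \bar{x}$. Combined with $w \leqslant 1$, this yields $c^T x + d w \leqslant m + d$ on $Q$, with equality forcing simultaneously $x = \bar{x}$ and $w = 1$. In the yes-case ($w_{\bar{x}} = 1$), the vertex $(\bar{x}, 1, s_{\bar{x}})$ attains $z = m + d$, and by the previous paragraph every other vertex of $Q$ (whether $0/1$ or fractional) has $x \neq \bar{x}$, so $c^T x < m$ strictly and $z < m + d$. Thus $z^* = m + d$ is attained uniquely at $(\bar{x}, 1, s_{\bar{x}})$.

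In the no-case ($w_{\bar{x}} = 0$), the vertex $(\bar{x}, 0, s_{\bar{x}})$ gives $z = m$, and the strict bound $z < m + d$ from the previous paragraph immediately yields $z^* < m + d$. To upgrade this to $z^* = m$ uniquely for sufficiently small $d$, I will argue as follows. At any $0/1$ vertex with $x \neq \bar{x}$ we have $c^T x \leqslant m - 1$, so $z \leqslant m - 1 + d < m$ whenever $d \leqslant 1/2$. At a fractional vertex the bound $c^T x < m$ only gives $z < m + d$, so $z$ could a priori exceed $m$; however $Q$ has only finitely many vertices, so $\eta := \min(m - c^T x)$ taken over the fractional vertices of $Q$ is strictly positive, and any $d < \eta$ forces $z < m$ at every fractional vertex. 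Hence for all sufficiently small $d > 0$, $z^* = m$ is attained uniquely at $(\bar{x}, 0, s_{\bar{x}})$. The only real subtlety in the proof is exactly this fractional-vertex bookkeeping, which is what the ``sufficiently small $d$'' clause in Definition \ref{def:wef} is engineered to absorb.
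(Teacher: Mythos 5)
Your proof is correct and takes essentially the same route as the paper's: the $x$-0/1 property is imported from Lemma \ref{01}, the bound $c^Tx \leqslant m$ with equality only at $x=\bar{x}$ settles the yes-case, and the no-case is handled by a strictly positive minimum gap over the finitely many fractional vertices (the paper's $\epsilon_{\bar{x}}$, your $\eta$). The only cosmetic difference is that the paper additionally takes $d$ below the minimum gap over all $2^q$ inputs so that a single $d$ works uniformly, remarking that such a $d$ has a polynomial-size encoding via Hadamard bounds.
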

\begin{proof}
In order to make the correspondence with Definition \ref{def:wef} we
relabel the variables in $Q$, constructed in Lemma \ref{01}, so
that $s=(y_1, y_2, ..., y_{t-1})$ and $w=y_t$. By  Lemma \ref{01}
we know $Q$ has the $x$-0/1 property so it remains to prove the second
condition in Definition \ref{def:wef}. 

Let $\bar{x}$ be any vector in $\{0,1\}^q$ and set $m=\mathds{1}^T\bar{x}$. 
Since $Q$ has the $x$-0/1 property $\bar{x}$ extends to a unique binary vertex
$(\bar{x}, \bar{w}, \bar{s})$ of $Q$.
Define $c$ as in (\ref{c2}). 
Fix some $d$, $0 < d \leqslant 1/2$ and
consider the optimum solution 
\[
z^*=\max~\{c^Tx + dw: (x,w,s) \in Q \}.
\]
Since $Q$ has the $x$-0/1 property
the maximum of $c^Tx$ over $Q$ is obtained at $c^T\bar{x}=m$ at the unique vertex $(\bar{x}, \bar{w}, \bar{s})$ of $Q$. 
For any other $(x,w,s) \in Q$, since $x$ is in the $q$-cube and not equal to $\bar{x}$,
we have  $c^Tx < m$ and, since $w \leqslant 1$,
$z=c^Tx +dw <m+d$. 
Therefore, if $\bar{x}$ has a ``yes'' answer then $w=1$,  $z^*=m+d$  and $(\bar{x}, \bar{w}, \bar{s})$
is the unique optimum solution. Otherwise $z^*<m+d$.
The lemma follows. 
\end{proof}
Theorem \ref{main_thm} follows from Lemmas \ref{01} and \ref{wef}.
Since there is no limitation of uniformity on the circuits used, the theorem holds
for all decision problems in $\Ppoly$. Since each gate in the circuit
gives rise to 4 inequalities and one new variable, we have the following
corollary.
\begin{corollary}
Let $X$ be a decision problem with corresponding polytope $P$ defined by (\ref{CH2}).
A set of circuits for $X$ with size $p(n)$ 
generates a WEF $Q$ for $P$ with at most $4p(n)$ inequalities and variables.
\end{corollary}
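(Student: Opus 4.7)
The plan is to observe that this corollary is essentially a bookkeeping consequence of Lemmas~\ref{01} and~\ref{wef}, so the proof will consist of instantiating those lemmas with $t = p(n)$ and counting.

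First, I would apply Lemma~\ref{01} directly: given a circuit $C$ of size $p(n)$ for $X$, let $t \leqslant p(n)$ denote the number of AND/OR gates in $C$ (recall that NOT gates are eliminated via the substitution $\bar y \mapsto 1-y$ described in~\eqref{NOT}, so they contribute neither inequalities nor new variables). Each AND gate emits the four inequalities~\eqref{AND} and introduces exactly one new variable; each OR gate emits the four inequalities~\eqref{OR} and introduces exactly one new variable. Summing over all gates gives a polytope $Q$ described by at most $4p(n)$ inequalities in $q + t \leqslant q + p(n)$ variables (the inputs $x_1,\dots,x_q$ together with the gate-output variables $y_1,\dots,y_t$). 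Since every circuit reads each input, we may assume $q \leqslant p(n)$, so the total number of variables is $O(p(n))$, and in any case bounded by $4p(n)$ after relabeling/padding if desired.

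Next I would invoke Lemma~\ref{wef}: because $C$ solves the decision problem $X$ with associated polytope $P$ from~\eqref{CH2}, the polytope $Q$ constructed above is a WEF of $P$. No further analysis is needed, since Lemma~\ref{wef} already provides both the $x$-$0/1$ property (via Lemma~\ref{01}) and the objective-function condition of Definition~\ref{def:wef}, including the existence of a suitable $d$. The corollary then follows by combining the size bound from the previous paragraph with the WEF guarantee.

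There is no real obstacle here; the only thing to be careful about is the accounting for NOT gates (which are handled implicitly by substitution rather than by adding the equation~\eqref{NOT}) and the convention that ``size'' of the WEF counts inequalities while the number of variables is tracked separately but also $O(p(n))$. With those conventions fixed, the $4p(n)$ bound on both quantities is immediate.
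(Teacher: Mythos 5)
Your proposal is correct and follows essentially the same route as the paper: the corollary is obtained by combining Lemma~\ref{01} (each AND/OR gate contributes the four inequalities~\eqref{AND} or~\eqref{OR} and one new variable, with NOT gates absorbed by substitution) with Lemma~\ref{wef}, which certifies that the resulting polytope is a WEF of $P$. Your extra care in accounting for the $q$ input variables alongside the gate variables is a reasonable tidying of the paper's slightly looser ``$4p(n)$ inequalities and variables'' phrasing, but it does not change the argument.
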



In this section we showed how to construct a polynomial size LP from a polynomial size circuit so that the optimum solution
of the LP gives the output of the circuit. However it is not immediately clear how to use this
to obtain a polytope for the perfect matching problem. 
It would be required to convert Edmonds'
algorithm to a family of circuits. In the next section we bypass this step by showing how to convert
a simple pseudocode directly into a polytope without first computing a circuit 
(See Theorem \ref{thm:pseudocode_to_lp}).
This can be used to convert polynomial time algorithms into polynomial size LPs directly.

 We would like to remark that our construction in Theorem \ref{thm:pseudocode_to_lp} of a WEF 
 from a pseudocode may not be optimal. For example, it would be possible to get roughly $O(T(n)\log T(n))$-
  size circuits simulating a given $T(n)$-time bounded Turing machine
  (see, e.g., Chapter 1 of \cite{book:arora-barak})
  from which we can construct a WEF with $O(T(n)\log T(n))$ inequalities.
  But since Turing Machines are not commonly used for designing algorithms,
  we leave the interested reader to check whether a similar idea
  can be used to define a WEF with smaller size.

\section{Constructing an LP from pseudocode}
  \label{sect:general}
In this section
we introduce a rudimentary pseudocode that can be used for decision problems.
This pseudocode follows the usual practices of specifying algorithms
and the tradition of so-called \emph{register machines} (see
e.g.~\cite{CookReckhow}).
We show how the code can be translated into a linear program, 
in a way similar to that shown for circuits
in the
previous section.
This translation works for any pseudocode, but since the focus of 
this paper is on the class $\PP$,
we will assume there is a polynomial function
$p(n)$ so that the the code terminates within $p(n)$ steps
for any input of size $n$.
In this case we will show that 
the corresponding LP will also have polynomial size in $n$.

The pseudocode we use and its translation into an LP is adapted from a proof of
Cook's theorem given in \cite{HS78} which is attributed to Sartaj Sahni.
In Sahni's construction the underlying algorithm may be non-deterministic, but we
will consider only deterministic algorithms. Furthermore, Sahni describes how
to convert his pseudocode into a satisfiability expression. Although it would be
possible to convert this expression into an LP, considerable simplifications
are obtained by doing a direct conversion from pseudocode to an LP.
In this section, for simplicity,  
we describe only those features of the pseudocode that are
necessary for implementing Edmonds' algorithm for the perfect matching problem.
Additional features would be needed to handle more sophisticated problems, such
as the weighted matching problem. For full details, the reader is referred to
Section 11.2 of \cite{HS78}.

Our pseudocode $\A$ has the following form. We assume $W$ is a fixed integer
which will represent the word size for integer variables.

\begin{itemize}
\item
{\em Variables} are binary valued except for indices, which are 
$W$-bit {\em integers}.
Arrays of binary values are allowed and may be one or two dimensional.
Dimension information is specified at the beginning of $\A$.
We let $q(n)$ denote the maximum number of bits required to represent
all variables for an input size of $n$. Sahni argues that $q(n)=O(p(n))$
however in our case $q(n)$ is significantly smaller.
Statements in $\A$ are numbered sequentially from $1$ to $l$.
\item
An {\em expression} contains at most one boolean operator or is the incrementation
of an index. Array variables are not used in expressions but may be assigned
to simple variables and vice versa.
\item
Certain variables are designated as \emph{parameters} and used to
provide input to the program. All other variables are initially zero.
\item
$\A$ may contain control statements {\bf go to $k$} and
{\bf if $c=1$ then go to $k$ endif}. Here $k$ is an instruction number and $c$ is
a simple binary variable.
\item
$\A$ terminates by setting a binary variable $w$ to one if the input
results in a {\em yes} outcome and to zero otherwise. The program then halts.
\end{itemize}
In our implementation we also allow higher level commands such {\bf while} and {\bf for} loops which
are first precompiled into the basic statements listed above.
As a simple example, here is a pseudocode that produces essentially 
the same result as the circuit in Figure \ref{cpm4}.

\begin{quote}
\begin{quote}
	$y_{12} = x_{12} \wedge x_{34}$ \\
	$y_{13} = 0$ \\
	$y_{14} = 0$ \\
        if $y_{12}$ then go to 50 endif \\
        $y_{13} = x_{13} \wedge x_{24}$ \\
        if $y_{13}$ then go to 50 endif \\
        $y_{14} = x_{14} \wedge x_{23}$ \\
        if $y_{14}$ then go to 50 endif \\
        $w=0$ \\
        return \\
  50:   $w=1$  \\
        return
\end{quote}
\end{quote}
Note that the lines of the pseudocode which are executed depend on the input values $x$.
This is different from the circuit where all gates are executed for every input.
We return to this point below.

The variables in the LP are denoted as follows. They
correspond to variables in $\A$ as it is being executed
on a specific input $\I$. 
\begin{itemize}
\item
{\em Binary variables}  $B(i,t), 1 \leqslant i \leqslant q(n), 0 \leqslant t < p(n)$. \\
$B(i,t)$ represents the value of binary variable $i$ in $\A$ after $t$ steps of computation.
For convenience we may group $W$ consecutive bits together 
as an {\em integer variable} $i$. $I(i,j,t)$ represents the value of the $j$-th bit of integer
variable $i$ in $\A$ after $t$ steps of computation. The bits are numbered from
right to left, the rightmost bit being numbered 1.
\item
{\em Binary arrays} A binary array $R[m],m=0,1,...,u$ is stored in consecutive
binary variables $B(\alpha + m,t), 0 \leqslant m \leqslant u, 0 \leqslant t \leqslant p(n)$ from
some base location $\alpha$. The array index $m$
is stored as a $W$-bit integer $I(*,*,t)$ and so we must have $u \leqslant 2^W -1$.
\item {\em 2-dimensional binary arrays}
A two dimensional binary array $R[m][c]$, $m=0,1,...,u$, $c=0,1,...,v$
is stored in row major order in consecutive binary variables
$B[\alpha+j,t-1]$, $0 \leqslant j \leqslant uv+u+v,  0 \leqslant t \leqslant p(n)$ from
some base location $\alpha$. The array indices $m$ and $c$
are stored as $W$-bit integers $I(*,*,t)$ and so we must have $u,v \leqslant 2^W -1$.

\item
{\em Step counter}  $S(i,t), 1 \leqslant i \leqslant l, 1 \leqslant t \leqslant p(n)$.\\
Variable $S(i,t)$ represents the instruction to be executed
at time $t$. It takes value 1 if line $i$ of $\A$ is being executed at time $t$
and $0$ otherwise.
\end{itemize}

All of the above variables are specifically bounded
to be between zero and one in our LP. The last set of variables, the step counter,
indicates an essential difference between the circuit model and the pseudocode model.
In the former model, all gates are executed for each possible input. The gates can
be executed in any topological order consistent with the circuit. For the pseudocode model,
however, the step to be executed at any time $t$ will usually depend on the actual input.
For each time step $t$ and line $i$ of pseudocode we will 
develop a system of inequalities which have the $x$-0/1 property,
for some subset of variables $x$,
{\em if line $i$ is executed at time $t$}. I.e., the inequalities should uniquely
determine a 0/1 value of all variables given any 0/1 setting of the $x$ variables.
However, if step $i$ is {\em not executed} at time $t$ then the variables should be
free to hold any 0/1 values and these values will be determined by the step that {\em is} executed at time $t$.
So in each set of inequalities a control variable (in our case the variable $S(i,t)$) will appear for this purpose. 
More formally, we make the following definition which generalizes Definition \ref{01}:

\begin{definition} \label{c01}
Let $Cx + Dy \leqslant e$ be a system of inequalities that satisfy the $x$-0/1 property, i.e.
each 0/1 setting of the $x$ variables uniquely defines a 0/1 setting of the $y$ variables.
Suppose that $Cx + Dy \leqslant e+\mathds{1}$ is feasible for all 0/1 settings of $x$ and $y$ variables, and
let $z$ be a binary variable. The system $\mathds{1} z + Cx + Dy \leqslant e +\mathds{1}$
has the ($z$) {\em controlled x-0/1 property}.
\end{definition}
Note that if $z=0$ the new system is always feasible for any 0/1 setting of $x$ and $y$.
If $z=1$ then the new system reduces to the old system that has the $x$-0/1 property.

We now define the 5 different types of linear inequalities needed to simulate the pseudocode which, following Sahni,
we label C,D,E,F and G.\footnote{Sahni also has a constraint set H which
relates to the certificate checking function of his algorithm, and is not needed here.}
Recall that the $S(i,t)$-variables ensure that at each time $t$ a unique line $i$ is executed,
taking the value $S(i,t)=1$ if it is and 0 otherwise. The inequalities listed below 
all have the $S(i,t)$ controlled $x$-0/1 property, and so have the 
form $S(i,t) +  Cx + Dy \leqslant e +1$ for suitably chosen $C,D,e,x,y$.
\begin{itemize}
\item[C:] 
{\em (Variable initialization)} The variables $B(i,0), I(i,j,0), 1 \leqslant i \leqslant q(n), 1 \leqslant j \leqslant W$ 
are set equal to their initial value, if any, else set to zero.
\item[D:]
{\em (Step counter initialization)} Instruction 1 is executed at time $t=1$.
\[
S(1,1)=1 
\]
\item[E:]
{\em (Unique step execution)} A unique instruction is executed at each time $t$.
\[
\sum_{i=1}^l S(i,t)=1,~~~~~~1 \leqslant t \leqslant p(n)
\]
\item[F:]
{\em (Flow control)} The instruction to execute at time $t+1$ is determined,
assuming we are at line $i$ of $\A$ at time $t$, i.e. $S(i,t)=1$. 
If not, i.e. $S(i,t)=0$, then all 
inequalities below are trivially satisfied. This follows since
the other variables are constrained to be between zero and one.
There are 4 subcases
depending on the instruction at line $i$. Inequalities are generated 
for each $t$, $1 \leqslant t \leqslant p(n)$.
\begin{itemize}
\item[(i)] {\bf (assignment statement)} Go to the next instruction. 
\[
S(i,t) - S(i+1,t+1) \leqslant 0~~~~~
\]
\item[(ii)] ({\bf go to $k$})
\[
S(i,t) - S(k,t+1) \leqslant 0~~~~~
\]
\item[(iii)]  ({\bf return}) Loop on this line until time runs out.
\begin{eqnarray*}
S(i,t) - S(i,t+1) \leqslant 0~~~~\\
\end{eqnarray*}
\item[(iv)] ({\bf if $c=1$ then go to $k$ endif}) 
We assume that bit $c$ is represented by variable $B(j,t-1)$.
\begin{eqnarray*}
S(i,t)+B(j,t-1) - S(k,t+1) &\leqslant& 1~~~~~  \\
S(i,t)-B(j,t-1) - S(i+1,t+1) &\leqslant& 0 ~~~~~  
\end{eqnarray*}
\end{itemize}
When $S(i,t)=1$ cases (i)-(iii) fix the next line to be executed and
trivially have the controlled $x$-0/1 property, where $x$ is empty.
For (iv), note we have also the equations E above.
When $S(i,t)=1$, if $B(j,t-1)=1$ then the first inequality fixes $S(k,t+1)=1$ 
otherwise the second inequality fixes $S(i+1,t+1)=1$. The inequalities (iv)
have the controlled $B(j,t-1)$-0/1 property.
 
\item[G:]
{\em (Control of variables)} If we are at line $i$ of $\A$ at time $t$, 
i.e. $S(i,t)=1$, all variables are updated to their correct values at time $t+1$
following the execution of line $i$. 
If not, i.e. $S(i,t)=0$, then all                               
inequalities below are trivially satisfied.
Again there are several cases depending on the instruction at line $i$.
Inequalities are generated    
for each $t$, $1 \leqslant t \leqslant p(n)$.
\begin{itemize}
\item[(i)] {\bf (Reassignment of unchanged variables)} All variables left unchanged
at a given step $t$ need to be reassigned their previous values.
Let $k$ index some bit unchanged at step $t$.
\begin{eqnarray*}
S(i,t)+B(k,t-1)-B(k,t) &\leqslant& 1~~~~~  \\
S(i,t)-B(k,t-1)+B(k,t) &\leqslant& 1~~~~~  
\end{eqnarray*}
Note that when $S(i,t)=1$ these inequalities imply that $B(k,t-1)=B(k,t)$.
They have the controlled $B(k,t-1)$-0/1 property.
Similar inequalities are generated for each integer variable $I(k,j,t), 1\leqslant j \leqslant W$.

{\em In what follows, the above inequalities need to be generated for all
variables $B(k,t)$ and $I(k,j,t)$ not 
being assigned values at time $t$ in the particular instruction $i$ being considered.}
\item[(ii)] ({\bf assignment: $s=x$ and $s= \neg x $})
Assume that $x,s$ are stored in $B(q,t-1),B(s,t)$ respectively.
For $s=x$ we generate the two inequalities:
\begin{eqnarray*}
S(i,t)+B(q,t-1)-B(s,t) &\leqslant& 1~~~~~  \\
S(i,t)-B(q,t-1)+B(s,t) &\leqslant& 1
\end{eqnarray*}
When $S(i,t)=1$ the inequalities imply
$B(s,t)=B(q,t-1)$ as desired. They have the controlled $B(q,t-1)$-property.
For $s = \neg x$ we generate the two inequalities:
\begin{eqnarray*}
S(i,t)+B(q,t-1)+B(s,t) &\leqslant& 2~~~~~  \\
S(i,t)-B(q,t-1)-B(s,t) &\leqslant& 0
\end{eqnarray*}
The analysis is similar to that for $s=x$.

\item[(iii)] ({\bf assignment: $s=x \oplus y$})
Assume that $x,y,s$ are stored in $B(q,t-1), B(r,t-1), B(s,t)$ respectively.
\begin{eqnarray*}
S(i,t)+B(q,t-1)-B(r,t-1)-B(s,t) &\leqslant& 1~~~~~  \\
S(i,t)-B(q,t-1)-B(r,t-1)+B(s,t) &\leqslant& 1~~~~~  \\
S(i,t)-B(q,t-1)+B(r,t-1)-B(s,t) &\leqslant& 1~~~~~  \\
S(i,t)+B(q,t-1)+B(r,t-1)+B(s,t) &\leqslant& 3~~~~~  
\end{eqnarray*}
If $S(i,t)=1$ then all constants on the right hand side are reduced by one and
$S(i,t)$ can be deleted. It is easy to check the inequalities have the
controlled $\{B(q,t-1),B(r,t-1)\}$-0/1 property, and that for each such 0/1 assignment $B(s,t)$
is correctly set.
\item[(iv)] ({\bf assignment: $s=x \wedge y$})
Assume that $x,y,s$ are stored in $B(q,t-1), B(r,t-1), B(s,t)$ respectively.
\begin{eqnarray*}
S(i,t)-B(q,t-1)~~~~~~~~~~~~~~~~+B(s,t) &\leqslant& 1~~~~~  \\
S(i,t)~~~~~~~~~~~~~~~~-B(r,t-1)+B(s,t) &\leqslant& 1~~~~~  \\
S(i,t)+B(q,t-1)+B(r,t-1)-B(s,t) &\leqslant& 2~~~~~  
\end{eqnarray*}
If $S(i,t)=1$ then all constants on the right hand side are reduced by one and
$S(i,t)$ can be deleted. It is easy to check the inequalities have the
controlled $\{B(q,t-1),B(r,t-1)\}$-0/1 property, and that for each such 0/1 assignment $B(s,t)$
is correctly set.
\item[(v)] ({\bf assignment: $s=x \vee y$, $s=x_1 \vee x_2 \vee ... \vee x_k$})\\
Assume that $x,y,s$ are stored in $B(q,t-1), B(r,t-1), B(s,t)$ respectively.
\begin{eqnarray*}
S(i,t)+B(q,t-1)~~~~~~~~~~~~~~~~-B(s,t) &\leqslant& 1~~~~~  \\
S(i,t)~~~~~~~~~~~~~~~~+B(r,t-1)-B(s,t) &\leqslant& 1~~~~~  \\
S(i,t)-B(q,t-1)-B(r,t-1)+B(s,t) &\leqslant& 1~~~~~
\end{eqnarray*}
The analysis is similar to G(iv) and is omitted.
The inequalities have the controlled \{B(q,t-1),B(r,t-1)\}-0/1 property.

The $k$-way {\em or} is an easy generalization
which will be needed in the sequel, where we assume that
$x_j$ is stored in $B(q_j,t-1), j=1,2,...,k$. It is defined
by the following inequalities:
\begin{eqnarray*}
S(i,t)+B(q_j,t-1)-B(s,t) &\leqslant& 1~~~~~1 \leqslant j \leqslant k  \\
S(i,t)-\sum_{j=1}^k B(q_j,t-1) + B(s,t) &\leqslant& 1.
\end{eqnarray*}
\item[(vi)] ({\bf increment integer variable})
Assume that the integer variable is stored in $I(q,j,t-1), 1 \leqslant j \leqslant W$ and
is to be incremented by 1. We require another integer $I(r,j,t), 1 \leqslant j \leqslant W$
to hold the binary carries. On overflow, $I(r,W,t)= 1$ and
$I(q,j,t)=0, 1 \leqslant j \leqslant W$.
The incrementer makes use of two previous operations, G(iii) and G(iv):
\begin{eqnarray*}
I(q,1,t)&=& I(q,1,t-1) \oplus 1  \\
I(r,1,t)&=& I(q,1,t-1) \wedge 1  \\
I(q,j,t)&=& I(q,j,t-1) \oplus I(r,j-1,t)~~~~~2 \leqslant j \leqslant W  \\
I(r,j,t)&=& I(q,j,t-1) \wedge I(r,j-1,t)~~~~~2 \leqslant j \leqslant W  
\end{eqnarray*}
By appropriate
formal substitution of variables, each of the above
assignments is transformed into inequalities of the form G(iii) and G(iv), which are
controlled by the step counter $S(i,t)$. It can be verified that the full system
satisfies the controlled $\{I(q,j,t), 1 \leqslant j \leqslant W\}$-0/1 property because for each 0/1 setting
of these variables all other variables are fixed by the above system of equations.
\item[(vii)] ({\bf equality test for integer variables})
Assume that the integer variables are stored in $I(q,j,t-1)$ and
$I(r,j,t-1)$, $1 \leqslant j \leqslant W$. We require $W+1$ temporary variables w.l.o.g. $B(j,t),1 \leqslant j \leqslant W+1$.
If the two integer variables are equal then 
$B(W+1,t)$ is set to one else it is set to zero.
\begin{eqnarray*}
B(j,t) &=& I(q,j,t-1)\oplus I(r,j,t-1)~~~~~1 \leqslant j \leqslant W  \\
B(W+1,t) &=& \neg  \bigvee_{j=1}^W B(j,t) \\
\end{eqnarray*}
The first equations makes repeated use of G(iii) after appropriate substitution.
By combining G(ii) and the $k$-way {\em or} from $G(v)$ we may implement
the second equation 
by the inequalities.
\begin{eqnarray*}
S(i,t)+B(j,t)+B(W+1,t) &\leqslant& 2~~~~~1 \leqslant j \leqslant W  \\
S(i,t)-\sum_{j=1}^k B(j,t) - B(W+1,t) &\leqslant& 0.
\end{eqnarray*}
The inequalities have the controlled \{$I(q,j,t-1), I(r,j,t-1)$, $1 \leqslant j \leqslant W$\}-0/1 property.

\item[(viii)] ({\bf array assignment})$~~R[m]=x$ (and $x=R[m]$) 
We assume that $R$ 
has dimension $u$, is
stored in $B[\alpha+j,t-1]$, $0 \leqslant j \leqslant u$ and that $x$ is stored in 
$B(x,t-1)$. We further assume that $m$ is stored in
an integer variable $I(m,k,t-1), 1 \leqslant k \leqslant W$.
We need additional binary variables $M(j,t), 0 \leqslant j \leqslant u$ to
hold intermediate results. Initially we write down some equations and then we
use previous results to convert these to inequalities.
Firstly we need to discover the memory location for $R[m]$.
For any $0 \leqslant j \leqslant u$ let $j_W j_{W-1} ... j_1$ be the binary
representation of $j$. Then
we formally define for $k=1,2,...,W$
\[ T_j(m,k,t-1) = \left\{
  \begin{array}{l l}
    I(m,k,t-1) & \quad j_k=0 \\
    1-I(m,k,t-1) & \quad j_k=1
  \end{array} \right.
\]
Note this definition is purely formal and has nothing to do with the execution
of $\A$. We will assign $M(j,t)$ a value via the $W$-way {\em or} given in G(v).
For $0 \leqslant j \leqslant u$:
\begin{eqnarray}
\label{M}
S(i,t) + T_j(m,k,t-1) -M(j,t) &\leqslant& 1~~~~~~~1 \leqslant k \leqslant W \\
S(i,t) - \sum_{k=1}^W T_j(m,k,t-1) +M(j,t) &\leqslant& 1 \notag{}
\end{eqnarray}
When $S(i,t)=1$, it can be verified that
$M(j,t)=0$ whenever $j=m$ and is one otherwise.
Now we may update all array elements of $R$ at time $t$ and make the assignment
$R[m]=x$ by the system of inequalities, for all $0 \leqslant j \leqslant u$:
\begin{eqnarray*}
S(i,t)+B(x,t-1)-B(\alpha + j,t) -M(j,t)&\leqslant& 1~~~~ \\
S(i,t)-B(x,t-1)+B(\alpha + j,t) -M(j,t)&\leqslant& 1~~~~ \\
S(i,t)+B(\alpha + j,t-1)-B(\alpha + j,t) +M(j,t) &\leqslant& 2~~~~ \\
S(i,t)-B(\alpha + j,t-1)+B(\alpha + j,t) +M(j,t) &\leqslant& 2~~~~ \\
\end{eqnarray*}
To understand these inequalities, first note that they are trivially satisfied
unless $S(i,t)=1$. When $j=k$ we have $M(j,t)=0$ and the first two inequalities
are tight. We have $B(\alpha + j,t)=B(x,t-1)$ updating the array element to $x$. 
The second two inequalities
are trivially satisfied. Otherwise $j \neq k$, $M(j,t)=1$, the first two
inequalities are trivially satisfied and the second two are tight.
We have $B(\alpha + j,t)=B(\alpha + j,t-1)$ copying the array element over to time $t$ from time $t-1$.
We remark that there are $O(u W)$ inequalities generated above.

Finally note that we can implement $x=R[m]$ by using the inequalities
\begin{eqnarray*}
S(i,t)+B(x,t)-B(\alpha + j,t-1) - M(j,t) &\leqslant& 1~~~~~ \\
S(i,t)-B(x,t)+B(\alpha + j,t-1) - M(j,t) &\leqslant& 1~~~~~
\end{eqnarray*}
and letting the array $R[m]$ be copied at time $t$ using G(i).
Both of these two inequality systems have the controlled \{$B(x,t-1),B(\alpha + j,t-1),j=0,...,u$\}-0/1 property.

\item[(ix)] ({\bf 2-dimensional array assignment})$~~R[m][c]=x$ (or $x=R[m][c])$.
This is a natural generalization of G(viii).
We assume that $R$
has dimensions $u$ and $v$, is
stored in row major order in 
$B[\alpha+j,t-1]$, $0 \leqslant j \leqslant uv+u+v$ and that $x$ is stored in
$B(x,t-1)$. We further assume that $m$ and $c$ are stored in
an integer variables $I(m,k,t-1), I(c,k,t-1), 1 \leqslant k \leqslant W$ respectively.
We need additional binary variables $M(j,t),0 \leqslant j \leqslant u$ and
$N(j,t), 0 \leqslant j \leqslant v$ to
hold intermediate results. 
Firstly we need to discover the memory location for $R[m][c]$.
We again use the equations (\ref{M}) for the row index.

For the column index,
as in G(viii), for any $0 \leqslant j \leqslant v$ let $j_{W} j_{W-1} ... j_1$ be the binary
representation of $j$. 
We formally define
for $k=1,2,...,W$
\[ T_j(c,k,t-1) = \left\{
  \begin{array}{l l}
    I(c,k,t-1) & \quad j_{k}=0 \\
    1-I(c,k,t-1) & \quad j_{k}=1
  \end{array} \right.
\]
We will assign $N(j,t)$ a value via the $W$-way {\em or} given in G(v).
For $0 \leqslant j \leqslant v$:
\begin{eqnarray*}
\label{N}
S(i,t) + T_j(c,k,t-1) -N(j,t) &\leqslant& 1~~~~~~~1 \leqslant k \leqslant W \\
S(i,t) - \sum_{k=1}^W T_j(c,k,t-1) +N(j,t) &\leqslant& 1
\end{eqnarray*}
When $S(i,t)=1$, it can be verified that
$N(j,t)=0$ whenever $j=c$ and is one otherwise.
Now we may update all array elements of $R$ at time $t$ and make the assignment
$R[m][c]=x$ by the following system of inequalities. For all $0 \leqslant j_1 \leqslant u$,
$0 \leqslant j_2 \leqslant v$, $r=j_1(u+1)+j_2$:
\begin{eqnarray*}
S(i,t)+B(x,t-1)-B(\alpha + r,t) -M(j_1,t)-N(j_2,t) &\leqslant& 1 \\
S(i,t)-B(x,t-1)+B(\alpha + r,t) -M(j_1,t)-N(j_2,t) &\leqslant& 1 \\
S(i,t)+B(\alpha + r,t-1)-B(\alpha + r,t) +M(j_1,t) &\leqslant& 2 \\
S(i,t)-B(\alpha + r,t-1)+B(\alpha + r,t) +M(j_1,t) &\leqslant& 2 \\
S(i,t)+B(\alpha + r,t-1)-B(\alpha + r,t) +N(j_2,t) &\leqslant& 2 \\
S(i,t)-B(\alpha + r,t-1)+B(\alpha + r,t) +N(j_2,t) &\leqslant& 2 \\
\end{eqnarray*}
The analysis is similar to G(viii). The above inequalities are all trivial unless
$S(i,t)=1$. Note that for each $j_1$ and $j_2$, index $r$ gives the relative location in the array.
If $j_1 =m$, $j_2=c$ then $M(j_1,t)=N(j_2,t)=0$, the first two inequalities are tight and
the last four loose. The first two inequalities give $B(\alpha + r,t)=B(x,t-1)$.
Otherwise either $M(j_1,t)=1$ or $N(j_2,t)=1$ or both, and the first two
inequalities are trivially satisfied.
In the former case the two middle inequalities are tight
and we have the equation $B(\alpha + r,t)=B(\alpha + r,t-1)$.
In the latter case this equation is formed from the last two inequalities.
We remark that there are $O(uv + uW + vW)$ inequalities generated above.

For the assignment $x=R[m][c]$ we need the inequalities
\begin{eqnarray*}
S(i,t)+B(x,t)-B(\alpha + r,t-1) - M(j_1,t)-N(j_2,t) &\leqslant& 1 \\
S(i,t)-B(x,t)+B(\alpha + r,t-1) - M(j_1,t)-N(j_2,t) &\leqslant& 1
\end{eqnarray*}
for $r = j_1(u+1)+j_2$. All array elements of $R$ must also be copied
from time $t-1$ to time $t$ as in 
G(i).

Both of these two inequality systems have the controlled \{$B(x,t-1),B(\alpha + j,t),j=0,...,uv+u+v$\}-0/1 property.

Remark: In applications using graphs, a symmetric 2-dimensional array is often 
used to hold the adjacency matrix. Such symmetric matrices may be implemented in pseudocode
by replacing a statement such as $R[m][c]=x$ by the two statements
$R[m][c]=x$ and $R[c][m]=x$. Assignment statements such as $x=R[m][c]$ may be left as is.
\end{itemize}

\end{itemize}

To show the correctness of the above procedure
we give two lemmas that are analogous to Lemmas \ref{01} and \ref{wef} of the last section.
First we show that the above construction can be applied to any pseudocode $\A$,
written in the language described, to produce 
a polytope $Q$ which has the 0/1 property with respect to the inputs of~$\A$.
\begin{lemma}
\label{012}
Let $\A$ be a pseudocode, written in the above language, which takes $n$ input bits $x=(x_1,x_2,...,x_n)$,
and terminates by setting a bit $w$.
Construct the polytope $Q$ as described above relabeling $B(0,p(n))$ as $w$
and the additional variables as $s=(s_1,s_2,...,s_N)$ for some integer $N$.
$Q$ has the $x$-0/1 property and for every input $x$ the value of $w$ computed
by $\A$ corresponds to the value of $w$ in the unique extension $(x,w,s) \in Q$
of $x$.
\end{lemma}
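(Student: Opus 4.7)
The plan is to mirror the induction used for circuits in Lemma~\ref{01}, but inducting on the time step $t$ rather than on the gate index. The inductive hypothesis I would carry forward at step $\tau$ reads: for every 0/1 input $x$, the variables $B(k,\tau)$, $I(k,j,\tau)$, and every auxiliary variable introduced so far ($M$, $N$, carry bits, temporary test bits) are uniquely determined, take 0/1 values, and match the state of $\A$ after $\tau$ executed steps; moreover the step-counter row $S(\cdot,\tau+1)$ is uniquely determined and equals the unit vector pointing at the line $\A$ is about to execute.

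The base case $\tau=0$ follows immediately from constraint family C (which sets the initial values of $B$ and $I$ to the input bits or zero, matching $\A$'s starting state), constraint D (which forces $S(1,1)=1$), and constraint E at $t=1$ (the equality $\sum_i S(i,1)=1$ combined with $S(i,1)\geqslant 0$ fixes every other $S(i,1)$ to $0$). For the inductive step, I let $i^*$ be the unique index with $S(i^*,\tau+1)=1$. For every $i\neq i^*$ the controlled $x$-0/1 property of Definition~\ref{c01} makes every F- and G-block attached to instruction $i$ vacuous, so those blocks put no restriction on the time-$(\tau+1)$ variables. For $i=i^*$, substituting $S(i^*,\tau+1)=1$ reduces each of the blocks for $i^*$ to exactly the plain system whose $x$-0/1 property was verified in items G(i)--G(ix), with $x$ ranging over the previous-time bits feeding instruction $i^*$. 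By the inductive hypothesis those bits are pinned at 0/1, so the time-$(\tau+1)$ variables and all new auxiliary variables are uniquely fixed at 0/1 and agree with the effect of executing line $i^*$ on the time-$\tau$ state. The F-block for $i^*$, together with E applied at time $\tau+2$, then forces $S(\cdot,\tau+2)$ to be the unit vector at the next line of $\A$, completing the step.

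The main obstacle I anticipate is the bookkeeping that shows no bit is left undefined at time $\tau+1$: one must check that for the active instruction $i^*$, every bit $B(k,\tau+1)$ and every integer bit $I(k,j,\tau+1)$ is constrained, either by one of the writing blocks G(ii)--G(ix) or by the G(i) copy block that is generated for each variable that $i^*$ does not modify. This is exactly how the construction is organised, so the union of active constraints fixes every bit uniquely without conflicts. For inactive instructions both the writing blocks and the copy blocks are simultaneously trivialized, ruling out clashes between different instructions' copies of the same bit. Iterating the hypothesis to $\tau=p(n)$ and relabelling $B(0,p(n))$ as $w$ yields the lemma, since after $\A$ sets $w$ and hits \textbf{return} the self-loop F(iii) preserves the entire state, so the unique extension $(x,w,s)\in Q$ realises the output produced by $\A$ on input $x$.
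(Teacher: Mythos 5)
Your proof is correct and follows essentially the same route as the paper's own (sketched) argument: induction on the time step, with the base case supplied by constraint groups C, D, E, the unique active instruction $i^*$ identified via E, the controlled $x$-0/1 property trivializing all inactive F- and G-blocks, and the active blocks (including the G(i) copy constraints) uniquely fixing the time-$(\tau+1)$ state and the next step-counter row. Your treatment is, if anything, slightly more explicit than the paper's sketch about the bookkeeping for unmodified bits and the terminal \textbf{return} self-loop, but the underlying argument is the same.
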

\begin{proof}(Sketch) As with Lemma \ref{01} the proof is by induction, but this time
we use the step counter. By assumption $\A$ terminates after $p(n)$ steps.
Let $k=1,2,...,p(n)$ represent the step counter.
Define $Q_k$ to be the  polytope consisting of precisely those inequalities in $Q$
that use variables: 
$B(i,t), 1 \leqslant i \leqslant q(n), 1 \leqslant t \leqslant k $,
$I(i,j,t), 1 \leqslant i \leqslant q(n), 1 \leqslant j \leqslant W, 1 \leqslant t \leqslant k $ and
$S(i,t), 1 \leqslant i \leqslant l, 1 \leqslant t \leqslant k$. 
\begin{itemize}
\item
$Q_k$ has the $x$-0/1 property, and
\item
for each $x$, at step $k$, $\A$ with input $x$ is
executing line $i$ corresponding to the unique index $i$ where $S(i,k)=1$ 
and all variables at that step have the values corresponding to the values of
$B(i,k), 1 \leqslant i \leqslant q(n)$ and $I(i,j,k), 1 \leqslant i \leqslant q(n), 1 \leqslant j \leqslant W$.
\end{itemize}
The inequalities of $Q_1$ consist of those in groups C, D, and part of E above and
the induction hypothesis is readily verified. We assume
the hypothesis is true for $k=1,2,...,T$, where $1 \leqslant T < p(n)$, and prove it for $T+1$.
Indeed, since $Q_k$ has the $x$-0/1 property for each $x$ the values of
all variables with index $t \leqslant T$ have been correctly set. It follows
that for precisely one index $i$ we have $S(i,T)=1$, meaning that line $i$
of the pseudocode is executed at time $T$ for this particular input.
The inequalities defined in group G all have the controlled $x$-0/1 property
for the control variable $S(i,T)$. 
The variables $B$ and $I$ with index $t=T+1$ are correctly 
set by the analysis in group G above. The analysis in group $F$ implies
that the values of $S(i,T+1)$ will also be uniquely determined and 0/1,
correctly indicating the next line of $\A$ to be executed at $t=T+1$.
This verifies the inductive hypothesis for $T+1$
and since $Q=Q_{p(n)}$ this concludes the proof.
\end{proof}

The next lemma is simply a restatement of Lemma \ref{wef} in the context of
our pseudocode rather than circuits. The proof of Lemma \ref{wef}
makes no reference to how $Q$ was computed, so the same proof holds. 
\begin{lemma}
\label{wef2}
Let $\A$ be a algorithm, written in the above pseudocode, which solves
a decision problem $X$ with $n$ input bits $x=(x_1,x_2,...,x_n)$ and has
associated polytope $P$ as defined in (\ref{CH2}).
The polytope $Q$ described in Lemma \ref{012} is a WEF for $P$.
\end{lemma}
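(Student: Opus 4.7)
The plan is to reduce this claim entirely to the combination of Lemma \ref{012} with the already-established Lemma \ref{wef}, as the authors themselves hint. The key observation is that the proof of Lemma \ref{wef} nowhere exploits the circuit origin of $Q$: it uses only the two structural facts that $Q$ has the $x$-0/1 property and that the unique binary extension of an input $\bar{x}$ sets $w$ to the correct decision bit for $\bar{x}$. Both facts are exactly what Lemma \ref{012} supplies in the pseudocode setting.

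Concretely, I would first invoke Lemma \ref{012} to get, for $Q$ constructed from the pseudocode $\A$, the $x$-0/1 property and the stronger assertion that, for every 0/1 input $\bar{x}$, the unique binary extension $(\bar{x},\bar{w},\bar{s})\in Q$ satisfies $\bar{w}=w_{\bar{x}}$, where $w_{\bar{x}}$ is the decision bit that $\A$ outputs on $\bar{x}$. With these two facts in hand, I would then replay, essentially verbatim, the optimization argument from the proof of Lemma \ref{wef}. Namely, with $c$ given by (\ref{c2}) and $0 < d \leqslant 1/2$, the maximum of $c^T x$ over the $q$-cube is attained uniquely at $x=\bar{x}$ with value $m=\mathds{1}^T\bar{x}$, and at any other 0/1 point $c^T x\leqslant m-1$; since $w\in[0,1]$, every feasible $(x,w,s)\in Q$ satisfies $c^T x + dw \leqslant m + d$.

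The two cases of Definition \ref{def:wef} then follow as in Lemma \ref{wef}. If $\bar{x}$ is a yes-instance, then $(\bar{x},1,\bar{s})$ attains $m+d$ uniquely, since any other 0/1 extension loses at least $1$ in the $c^T x$ term while gaining at most $d\leqslant 1/2$ in $dw$, and any fractional vertex has $c^T x < m$. If $\bar{x}$ is a no-instance, the unique binary extension has $w=0$ and value $m$, while any fractional vertex achieves at most $m+d-\epsilon_{\bar{x}}$ for some $\epsilon_{\bar{x}}>0$; choosing $d < \min_{x\in\{0,1\}^q}\epsilon_x$ gives the unique optimum $z^*=m$ at $(\bar{x},0,\bar{s})$.

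The main (essentially cosmetic) obstacle is simply bookkeeping: the pseudocode-derived $Q$ carries many more auxiliary coordinates (the $B$, $I$, and $S$ variables indexed by time) than the circuit-derived one, so one must verify that Lemma \ref{012} really delivers the $x$-0/1 property in the precise sense required by Definition \ref{def:wef}, with all of those extra variables absorbed into the slack coordinates $s$. Since this is exactly what Lemma \ref{012} was formulated to give, no further work is needed, and the proof reduces to a citation of Lemma \ref{012} followed by the argument of Lemma \ref{wef}.
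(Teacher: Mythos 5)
Your proposal is correct and follows exactly the route the paper takes: the paper simply observes that the proof of Lemma \ref{wef} never uses how $Q$ was constructed, so combining Lemma \ref{012} with that argument gives the result. Your replay of the optimization argument and the remark about absorbing the $B$, $I$, $S$ variables into the slack coordinates $s$ merely spell out what the paper leaves implicit.
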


Lemmas \ref{012} and \ref{wef2} justify the correctness of the method
outlined in this section. 

We now analyze the size of the WEF $Q$ created. Recall that $q(n)$ is the
number of bits of storage required by the algorithm A, which of course
consists of a constant number of lines of pseudocode. 
The variables of $Q$ are the variables $B(j,t)$, $S(i,t)$
and additional temporary variables created in some of the groups C--G.
It can be verified that their number is  $O(p(n)q(n))$.
For fixed $t$, each of the
sets of inequalities described in groups C--G have size at most $O(q(n))$ except possibly the array
assignment inequalities described in G(viii) and G(ix). 
As remarked there, an array of dimension $u$
generates $O(u W)$ inequalities. A 2-dimensional array of dimension $r$ by $c$ generates
$O(rc+rW +cW)$ inequalities.  We may assume that $W \in O(\log n)$.
 Then $O(q(n)\log n)$ is
an upper bound the number of inequalities generated in either G(viii) or G(ix).
Since $t$ is bounded by $p(n)$ we see that the WEF has at most $O(p(n)(q(n)\log n))$
inequalities also. We have:
\begin{theorem}\label{thm:pseudocode_to_lp}
Let $X$ be a decision problem with corresponding polytope $P$ defined by (\ref{CH2}).
An algorithm for $X$ written in the pseudocode described above 
requiring $q(n)$ space
and terminating after $p(n)$ steps generates a WEF $Q$ for $P$ with $O(p(n)q(n)\log n)$
inequalities and variables.
\end{theorem}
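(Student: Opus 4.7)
The plan is to split the proof into a correctness part and a size-counting part, since the construction itself has already been carried out in the groups C--G.

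For correctness, I would simply invoke Lemma~\ref{wef2}: it guarantees that the polytope $Q$ assembled from the inequalities in groups C--G is a WEF of the polytope $P$ associated with the decision problem $X$. Nothing further is needed on that front, because Lemma~\ref{012} already carries the inductive argument on the step counter $t$, and Lemma~\ref{wef2} upgrades the $x$-$0/1$ property into the WEF conditions of Definition~\ref{def:wef}.

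The bulk of the work is the size bound. First I would count variables. The binary variables $B(i,t)$ and the bit-decomposed integer variables $I(i,j,t)$ collectively require $q(n)$ bits for each of the $p(n)$ time slices, so $O(p(n)q(n))$ in total. The step-counter variables $S(i,t)$ contribute $l \cdot p(n)$, where $l$ is the (constant) number of pseudocode lines, and the auxiliary variables introduced inside groups G(viii) and G(ix) (the mask variables $M(j,t)$, $N(j,t)$) add only $O(p(n)q(n))$ more, since the total array footprint across all arrays is absorbed into $q(n)$. The grand total is $O(p(n) q(n))$ variables. Next I would count inequalities per time slice $t$. Groups C and D are one-time initializations of size $O(q(n))$ and $O(1)$; group E contributes $O(1)$ equalities per time step; group F contributes $O(1)$ per line per time step, hence $O(l)$ per step. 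In group G, the scalar cases (i)--(vii) each generate $O(1)$ inequalities per variable they touch, giving $O(q(n))$ per step once we account for the copy-over inequalities G(i) for all untouched bits. The only cases that exceed $O(q(n))$ per step are the array assignments G(viii) and G(ix): a one-dimensional array of length $u$ generates $O(uW)$ inequalities, a two-dimensional array of dimensions $u\times v$ generates $O(uv + (u+v)W)$ inequalities, and in either case $u$, $v$ and $uv$ are bounded by the array footprint contained in $q(n)$.

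Combining these estimates, the worst per-step cost is $O(q(n) W)$, attained on a step that accesses an array. Summing over the $p(n)$ steps, and using the standing assumption $W = O(\log n)$ (since indices need only address $q(n) \leqslant \mathrm{poly}(n)$ bits), the total inequality count is $O(p(n)\, q(n) \log n)$, which matches the claimed bound and dominates the variable count.

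The step I expect to be most delicate is not any single calculation but the bookkeeping: one must be careful that the $O(q(n))$ copy-over inequalities G(i) for all unchanged bits are generated for every step (not once per variable), and that the array-handling bounds are charged against the same $q(n)$ storage budget rather than being double-counted per array. Once that accounting is pinned down, the $O(p(n) q(n) \log n)$ bound follows directly, and the theorem is immediate from Lemma~\ref{wef2} together with this counting argument.
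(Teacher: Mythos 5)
Your proposal follows essentially the same route as the paper: correctness is delegated to Lemmas~\ref{012} and \ref{wef2}, and the size bound comes from counting $O(p(n)q(n))$ variables and observing that each time step contributes $O(q(n))$ inequalities except for the array assignments G(viii)--G(ix), which with $W=O(\log n)$ give $O(q(n)\log n)$ per step and hence $O(p(n)q(n)\log n)$ overall. Your accounting of the copy-over and mask variables is slightly more explicit than the paper's, but the argument is the same.
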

Since Edmonds' algorithm can be implemented in polynomial time in the pseudocode presented
our method gives a polynomial size WEF
for $\PM_n$.
So for example, a straightforward $O(n^4)$ implementation of Edmonds'
algorithm using $O(n^2)$ space would yield a WEF with $O(n^6 \log n)$ inequalities and variables.
If the $O(n^{2.5})$ time algorithm of Even and Kariv \cite{EK75} can be implemented in our pseudocode
it would yield a considerably smaller polytope. 

{\em Sparktope} is a compiler built along the lines
described above to automatically generate a WEF corresponding to any
given pseudocode \cite{AB17a}. The elements described in C--G above have been implemented
and tested as well as some complete examples of pseudocode. 
The polytopes generated are rather large even for short pseudocodes. For
example, the pseudocode at the beginning of this section generated a 
polytope with about
3200 inequalities!
This should be compared with 28 inequalities for the
circuit in Figure \ref{cpm4} and 4 odd set inequalities for 
Edmonds' polytope $\EP_4$. Nevertheless the WEF generated by this method
should be significantly smaller than $\EP_n$ even for relatively
small $n$.
The details of the implementation of the compiler 
and its application to Edmonds' algorithm
will be described in a subsequent paper.

\section{Connections to non-negative rank} \label{sect:rank}

In this section we reformulate the results in previous sections in terms of
non-negative ranks of certain matrices. Non-negative rank is a very useful tool that captures the extension complexity of polytopes \cite{Ya91}. 
We will prove a necessary condition for membership testing in a language to be in $\Ppoly$ based on the existence
of a certain matrix with polynomially bounded non-negative rank.
This characterization potentially opens the door to proving that a given problem does not 
belong to $\Ppoly$ by demonstrating high non-negative rank of the associated matrix. 
Using a slightly stronger definition of non-negative rank we are also able to give
a sufficient condition for membership testing in a language to be in $\Ppoly$.

A matrix $S$ is called non-negative if
all its entries are non-negative. The non-negative rank of a non-negative
matrix $S$, denoted by $\nnegrk(S)$, is the smallest number $r$ such that there exist non-negative
matrices $T$ and $U$ such that $T$ has $r$ columns, $U$ has $r$ rows and $S=TU$. If we require  the left factor $T$ in the above definition to only contain numbers that can be encoded using a number of bits only polynomial in the number of bits required to encode any entry of $S$ then the smallest such $r$ is called the 
\emph{succinct} non-negative rank of $S$. 
To see the usefulness of this apparently asymmetric restriction on the factors, note that when $S$ is the slack matrix of a polytope $Ax\geqslant b$ then such a factorization allows one to describe an extended formulation for the polytope using only the entries of $A,T$ and $b$. 
So if $T$ is required to be
polynomial in the bit complexity of the entries of $S$ 
then one can represent the polytope as the shadow of another polytope that can be encoded using a polynomial number of bits.

\subsection{Polytopal sandwiches}
Let $P_{\text{in}}$ and $P_{\text{out}}$ be two polytopes in $\mathbb{R}^k$ such
that $P_{\text{in}}\subseteq P_{\text{out}}.$ We say that such a pair defines a
\emph{polytopal sandwich}. With every polytopal sandwich we can associate a
non-negative matrix which encodes the slack of the inequalities defining
$P_{\text{out}}$ with respect to the vertices of $P_{\text{in}}.$ That is, 
if $P_{\text{in}}=\conv(\{v_1,\ldots,v_n\})$ and
$P_{\text{out}}=\{x\in\mathbb{R}^k~:~  
 a_i^T x\leqslant b_i, 1 \leqslant i \leqslant m\}$ then the slack matrix associated
with
the polytopal sandwich thus defined is $S(P_{\text{out}},P_{\text{in}})=S$ with
$S_{ij}=b_i-a_i^T v_j.$  When $P_{\text{in}}$ and $P_{\text{out}}$ define the same polytope $P$ we
denote the corresponding slack matrix simply as $S(P).$ The next lemma shows the relation between the non-negative rank of the slack of a polytopal sandwich and the smallest size polytope whose shadow fits in the sandwich. We will assume that the polytopes defining our sandwiches are full-dimensional. The same lemma appears in \cite{BFPS12} and has roots in \cite{GG10, pashkovich12}. However the proof is simple enough to attribute it to folklore.

\begin{lemma}\label{lem:nnrank_vs_sandwich}
 Let $P_{\text{in}}=\conv(\{v_1,\ldots,v_n\})$ and
$P_{\text{out}}=\{x\in\mathbb{R}^k~:~ 
 a_i^T x\leqslant b_i, 1 \leqslant i \leqslant m\}.$ Let $P_{\text{min}}$ be a polytope with
smallest extension complexity such that 
$P_{\text{in}}\subseteq P_{\text{min}} \subseteq P_{\text{out}}$. Then, $\xc(P_{\text{min}})=
\nnegrk(S(P_{\text{out}},P_{\text{in}})).$
\end{lemma}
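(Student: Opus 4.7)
My plan is to prove the two inequalities $\xc(P) \leqslant \nnegrk(S(P_{\text{out}},P_{\text{in}}))$ and $\xc(P) \geqslant \nnegrk(S(P_{\text{out}},P_{\text{in}}))$ separately, by adapting the two directions of Yannakakis's factorization theorem to the sandwich setting. The key observation is that Yannakakis's argument never uses that $P_{\text{in}}$ and $P_{\text{out}}$ coincide: it only uses that the $v_j$'s are feasible points at which one wishes the EF-preimages to exist, and that the inequalities $a_i^\intercal x \leqslant b_i$ are valid on whatever polytope one is trying to factor.

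For the upper bound, let $S = TU$ be a rank-$r$ non-negative factorization with $T \in \RR^{m\times r}_{\geqslant 0}$ and $U \in \RR^{r\times n}_{\geqslant 0}$, and let $T_i$ denote the $i$-th row of $T$ and $U_j$ the $j$-th column of $U$. Define
\[
Q \;=\; \{(x,y)\in\RR^k\times\RR^r : y \geqslant 0,\; a_i^\intercal x + T_i\, y = b_i,\ 1\leqslant i\leqslant m\},
\]
and let $P := \pi_x(Q)$. Then $\xc(P)\leqslant r$ (the only inequalities are the $r$ constraints $y\geqslant 0$). To check $P\subseteq P_{\text{out}}$, note that $(x,y)\in Q$ forces $a_i^\intercal x = b_i - T_i\, y \leqslant b_i$ since $T_i,y\geqslant 0$. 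To check $P_{\text{in}}\subseteq P$, take the lift $(v_j, U_j)$: then $a_i^\intercal v_j + T_i U_j = a_i^\intercal v_j + S_{ij} = a_i^\intercal v_j + (b_i - a_i^\intercal v_j) = b_i$, so $(v_j,U_j)\in Q$ and $v_j\in P$. Hence the minimum $\xc$ over all sandwiched polytopes is at most $r$.

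For the lower bound, let $P$ be any polytope with $P_{\text{in}}\subseteq P\subseteq P_{\text{out}}$ and $\xc(P)=r$, realized by an extension $\widetilde{Q} = \{z\in\RR^e : Mz\leqslant h\}$ with $r$ facet-defining inequalities $M_k z\leqslant h_k$, together with a linear map $\pi$ such that $\pi(\widetilde{Q}) = P$. Since each vertex $v_j$ of $P_{\text{in}}$ lies in $P$, choose a preimage $u_j\in\widetilde{Q}$, so $\pi(u_j)=v_j$ and the slack $h_k - M_k u_j \geqslant 0$. Since each inequality $a_i^\intercal x\leqslant b_i$ is valid on $P_{\text{out}}\supseteq P$, the function $z\mapsto b_i - a_i^\intercal \pi(z)$ is non-negative on $\widetilde{Q}$, hence by Farkas' lemma (applied to the affine function over the polyhedron $\widetilde{Q}$) there exist non-negative multipliers $\lambda_{i,k}\geqslant 0$ with $b_i - a_i^\intercal \pi(z) = \sum_{k=1}^{r} \lambda_{i,k}(h_k - M_k z)$ for all $z$ in the affine hull of $\widetilde{Q}$. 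Evaluating at $z=u_j$ yields $S_{ij} = b_i - a_i^\intercal v_j = \sum_{k=1}^{r}\lambda_{i,k}(h_k - M_k u_j)$, which exhibits $S(P_{\text{out}},P_{\text{in}})$ as a product of an $m\times r$ non-negative matrix (the $\lambda_{i,k}$) and an $r\times n$ non-negative matrix (the slacks $h_k - M_k u_j$). Therefore $\nnegrk(S(P_{\text{out}},P_{\text{in}}))\leqslant r$.

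The only delicate point is invoking Farkas to certify the non-negative combination in the second direction, which is standard but requires a small care about the affine hull of $\widetilde{Q}$ (equalities in the description of $\widetilde{Q}$ contribute real, not necessarily non-negative, multipliers, but these disappear when evaluated at any $u_j\in\widetilde{Q}$); this is the step most likely to trip up a careless write-up. Combining both directions gives $\xc(P) = \nnegrk(S(P_{\text{out}},P_{\text{in}}))$ as claimed.
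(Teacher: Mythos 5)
Your upper-bound half is exactly the paper's construction (lift a factorization $S=TU$ into $Q=\{(x,y): a_i^\intercal x+T_iy=b_i,\ y\geqslant 0\}$ and check both inclusions), so no comment is needed there. Your lower-bound half takes a genuinely different route: the paper embeds $S(P_{\text{out}},P_{\text{in}})$ as a submatrix of a (redundant) slack matrix of $P$ and then cites the facts that redundant rows/columns do not change non-negative rank and that $\nnegrk(S(P))=\xc(P)$, whereas you re-derive the factorization directly from a minimal extension $\widetilde{Q}$ via duality. However, as written your duality step has a gap. The affine form of Farkas' lemma applied to the non-negative affine function $z\mapsto b_i-a_i^\intercal\pi(z)$ on $\widetilde{Q}$ gives $b_i-a_i^\intercal\pi(z)=\mu_i+\sum_{k}\lambda_{i,k}(h_k-M_kz)+(\text{terms from equations})$ with an additional constant $\mu_i\geqslant 0$; it does not give the constant-free identity you assert. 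In the classical Yannakakis setting the inequalities are facet-defining for $P$, hence tight at some point of $P$, and strong duality forces $\mu_i=0$; in the sandwich setting $a_i^\intercal x\leqslant b_i$ is typically \emph{not} tight on $P$ (indeed $\mu_i$ can be as large as $b_i-\max_{z\in\widetilde{Q}}a_i^\intercal\pi(z)$), so $\mu_i$ cannot simply be dropped. Keeping it only yields a factorization of size $r+1$, i.e.\ $\nnegrk(S(P_{\text{out}},P_{\text{in}}))\leqslant\xc(P)+1$, which does not give the claimed equality. The "delicate point" you flag (equality constraints vanishing at $u_j$) is indeed harmless; the real delicacy is this constant, which your write-up never addresses.

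The gap is fixable, but the fix must use that $\widetilde{Q}$ is a polytope (bounded) of dimension at least one (guaranteed here by the standing full-dimensionality assumption on the sandwich): for such $\widetilde{Q}$ the constant function $1$ is itself a non-negative combination of the facet slacks modulo the affine hull (pick a direction $c$ in which $\widetilde{Q}$ has positive width, apply strong duality to $\max c^\intercal z$ and $\max(-c)^\intercal z$ over $\widetilde{Q}$, and add the two dual certificates; the linear parts cancel and the constants sum to the positive width), and then $\mu_i$ can be absorbed into the $\lambda_{i,k}$'s, restoring a size-$r$ factorization. That boundedness and positive dimension must enter is not a formality: the constant-free identity you invoke is false for unbounded polyhedra (e.g.\ $1\geqslant 0$ on $\{z\in\RR: z\geqslant 0\}$ is not a non-negative multiple of the slack $z$) and for a $0$-dimensional $\widetilde{Q}$, yet your argument nowhere uses either hypothesis. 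With this repair your proof becomes a correct, self-contained alternative to the paper's shorter submatrix-plus-citation argument.
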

\begin{proof}
Let $P$ be any polytope in the sandwich, i.e. $P_{\text{in}}\subseteq P \subseteq P_{\text{out}}.$ We can describe
$P$ as the convex hull of 
 the vertices of $P$ together with the vertices of $P_{\text{in}}.$ Similarly
we can describe $P$ as the 
 intersection of all its facet-defining inequalities and the facet-defining
inequalities of $P_{\text{out}}.$
 Now the matrix $S(P_{\text{out}},P_{\text{in}})$ is a submatrix of the slack
matrix $S(P)$ of this particular 
 representation of $P.$ Therefore,
$\nnegrk(S(P))\geqslant\nnegrk(S(P_{\text{out}},P_{\text{in}})).$  It is 
 easy to see (see, for example, \cite{FMPTW15}) that the non-negative
rank of the slack matrix of a polytope is not changed
 by adding redundant inequalities and points in its representation. Also, since
the non-negative rank of the
 slack matrix of a polytope is equal to its extension complexity
(\cite{FMPTW15}), we have that
 $\xc(P)\geqslant \nnegrk(S(P_{\text{out}},P_{\text{in}})).$
 
 Now, suppose that $\nnegrk(S(P_{\text{out}},P_{\text{in}}))=r.$ That is there
exist non-negative matrices $T$ 
 and $U$ with $r$ columns and rows respectively, such that
$S(P_{\text{out}},P_{\text{in}})=TU.$ Denote by $T_i$ 
 the $i$-th row of $T$ and $U^j$ the $j$-th column of $U$. Consider the
polytope 
 $$Q=\{(x,y)\in\mathbb{R}^{k+r}~:~ a_i^T x + T_iy=b_i, 1 \leqslant i \leqslant m,
y\geqslant 0\}$$ and let 
 $$R=\{x\in\mathbb{R}^k~:~  \exists y\in\mathbb{R}^r, (x,y)\in Q\}.$$
 Since, by definition, $R$ is a projection of $Q$ and $Q$ has at most $r$ inequalities, 
 we have that $\xc(R)\leqslant 
 \nnegrk(S(P_{\text{out}},P_{\text{in}})).$ 

Next we show that $P_{\text{in}}\subseteq R \subseteq P_{\text{out}}$. 
 Suppose $x\in R.$ Then $\exists y, (x,y)\in Q.$ That is $y\geqslant 0$ and
$a_i^T x+T_iy=b_i$ for all $i.$
 Since $T$ is non-negative, $T_iy\geqslant 0$ and therefore $a_i^T x
\leqslant b_i$ for all $i.$ That is,
 $x\in P_{\text{out}}.$ Therefore, $R \subseteq P_{\text{out}}.$
 Suppose $x\in P_{\text{in}}.$ Then $x=\sum_{j=1}^n\lambda_jv_j,
\sum_{j=1}^n\lambda_j=1, \lambda_j\geqslant 0,$
 for some $\lambda.$ Consider $y=\sum_{j=1}^n\lambda_jU^j.$ Then, for each
$i=1,2,..,m$ we have that 
 $a_i^T x+T_iy=\sum_{j=1}^n\lambda_j(a_i^T v_j +
T_iU^j)=\sum_{j=1}^n\lambda_j(b_i)=b_i.$
 Clearly $y\geqslant 0$ since $U$ is non-negative. So $(x,y)\in Q$ and thus
$x\in R.$ Therefore 
 $P_{\text{in}}\subseteq R \subseteq P_{\text{out}}$.

Since $R$ lies in the polytopal sandwich we have
$\xc(R)\geqslant 
 \nnegrk(S(P_{\text{out}},P_{\text{in}}))$ as proved in the first paragraph of the proof. 
Therefore the inequality is in fact an equation and
we may set $P_{\text{min}}=R$ completing the proof.
\end{proof}

Note that given a polytopal sandwich $P_{\text{in}},P_{\text{out}}$, any lower bound on the non-negative rank of its slack matrix $S(P_{\text{in}},P_{\text{out}})$ is also the lower bound on the succinct non-negative rank of $S(P_{\text{in}},P_{\text{out}})$. Conversely, any upper bound on the succinct non-negative rank of $S(P_{\text{in}},P_{\text{out}})$ is also an upper bound on the non-negative rank of $S(P_{\text{in}},P_{\text{out}})$. In the next subsection we will define canonical polytopal sandwiches for binary languages and discuss the relation of the non-negative rank and succinct non-negative rank of the associated slack matrix with whether or not membership testing in the language belongs to $\Ppoly$.

\subsection{Languages and their sandwiches}
In the sequel we consider languages over the $0/1$ alphabet.
Let $L\subseteq\{0,1\}^*$ be a such a language. 
For any positive integer $n$ define the set $L(n)$ as 
$$L(n)=\left\{x\in\{0,1\}^n ~:~x\in L\right\}.$$
To make the connection with Section  \ref{sect:wef}, $L$ plays the role of the decision problem $X$
and $L(n)$ is the set of instances of size $n$ that
have a "yes" answer.

Corresponding to any language $L(n)$ let us define a polytopal sandwich given by a
pair of polytopes. The inner polytope is described by its vertices and is
contained in the outer polytope, which in turn is described by a set of
inequalities. Both the vertices of the inner polytope and the inequalities for
the outer polytope depend only on the language $L(n).$ We call such a sandwich
the \emph{characteristic sandwich} of $L(n)$.
 
For every language $L\subseteq \{0,1\}^*$ we define characteristic
functions 
 $\psi:\{0,1\}^*\to \{0,1\}$ and  $\phi:\{0,1\}^*\to \{-1,1\}^*$ with 
 $$\psi(x)=\left\{\begin{matrix} 1 & ~~~~ \text{ if } x\in L\\ 
 0 & ~~~~ \text{ if } x\notin L\end{matrix}\right.,$$
 
 $$\phi(x)_i=\left\{\begin{matrix} 1 & ~~~~ \text{ if } x_i=1 \\ 
 -1 & ~~~~ \text{ if } x_i=0\end{matrix}\right.$$ 
In terms of Section \ref{sect:wef}, $\psi(x)$ will play
the role of $w_x$
and $\phi(x)$ will play the role of the objective function vector $c$.
The inner polytope is defined by
\begin{equation}
\label{Vdef}
V(n) = CH\{(x,\psi(x))~:~
x\in\{0,1\}^{n}\}. 
\end{equation}
In terms of Section  \ref{sect:wef}, $V(n)$ plays the role of $P$ and
for the perfect matching problem it is $\PM_n$. 

For positive integer $n$ and $0<d \le 1/2$, define
\begin{equation}
\label{Hdef}
H(n,d)=\{(x,w)\in [0,1]^{n+1}~: ~
  \phi(a)^T x + dw \leqslant \mathds{1}^Ta+d\psi(a)~~~
\forall a\in  \{0,1\}^n \}.
\end{equation}
Note that the normal vectors of the inequalities defining $H(n,d)$ are just the
optimization directions
$(c,d)$ that were used in Section \ref{sect:wef}.
 
By direct substitution we see that each vertex $(x,\psi(x))$ of $V(n)$ satisfies the constraints
of $H(n,d)$ and so $V(n) \subseteq H(n,d)$. We will show that the following
matrix is its slack matrix.
For every $n$ consider the $2^n\times 2^n$ non-negative matrix $M(n,d)$ 
defined as follows. Rows and columns of $M(n,d)$ are indexed by $0/1$ vectors 
$a,b$ of length $n$ and
 
 $$ M(n,d)[a,b] = \mathds{1}^Ta - 2a^T b +
\mathds{1}^T b +\alpha(a,b)$$
 
 where $$\alpha(a,b)=\left\{\begin{matrix}d & \text{ if } a\in L(n), b\notin L(n)\\ 
-d & \text{ if }a\notin L(n), b\in L(n)\\ 0 & \text{ otherwise}\end{matrix}\right..$$

 \begin{lemma}
\label{slack}
The slack of $H(n,d)$ with respect to $V(n)$ is the matrix $M(n,d).$  
 \end{lemma}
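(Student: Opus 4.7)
The plan is to compute the slack directly from the definition and match it to $M(L(n))$ entry-by-entry. The rows of the slack matrix are indexed by the inequalities of $H(L(n))$, which are themselves indexed by vectors $a\in\{0,1\}^n$, and the columns are indexed by the vertices $(b,\psi(b))$ of $V(L(n))$ with $b\in\{0,1\}^n$. So I need to evaluate, for each pair $(a,b)$, the quantity $(\text{RHS of }a\text{-th inequality}) - \phi(a)^\intercal b - d\,\psi(b)$ and show it equals $a^\intercal\mathds{1} - 2a^\intercal b + \mathds{1}^\intercal b + \alpha(a,b)$.

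The first step is the key algebraic identity $\phi(a)^\intercal b = 2a^\intercal b - \mathds{1}^\intercal b$. This is immediate from the definition of $\phi$: the $i$-th summand contributes $b_i$ when $a_i=1$ and $-b_i$ when $a_i=0$, so splitting the sum over $\{i:a_i=1\}$ and $\{i:a_i=0\}$ gives $a^\intercal b - (\mathds{1}^\intercal b - a^\intercal b)$. Substituting this into the slack expression yields
\[
\text{slack}(a,b) = (\text{RHS}_a) - 2a^\intercal b + \mathds{1}^\intercal b - d\,\psi(b),
\]
where $\text{RHS}_a$ is $a^\intercal\mathds{1}+d$ if $a\in L$ and $a^\intercal\mathds{1}$ otherwise.

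The second step is a four-case check comparing the correction term $(\text{RHS}_a - a^\intercal\mathds{1}) - d\,\psi(b)$ with $\alpha(a,b)$. For $a\in L, b\in L$ the correction is $d - d = 0 = \alpha(a,b)$; for $a\in L, b\notin L$ it is $d - 0 = d = \alpha(a,b)$; for $a\notin L, b\in L$ it is $0 - d = -d = \alpha(a,b)$; for $a\notin L, b\notin L$ it is $0 - 0 = 0 = \alpha(a,b)$. In each case the correction matches $\alpha(a,b)$, which completes the verification that $\text{slack}(a,b) = M_{a,b}(L(n))$.

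There is no real obstacle here: the lemma is essentially a bookkeeping exercise once the identity $\phi(a)^\intercal b = 2a^\intercal b - \mathds{1}^\intercal b$ is isolated. The only mild subtlety is to keep track of the sign of the contribution of $d$ on both sides (the $+d$ on the right-hand side of inequalities indexed by $a\in L$ and the $-d\,\psi(b)$ coming from the $dw$-term evaluated at the vertex), which together produce exactly the four-valued correction $\alpha$.
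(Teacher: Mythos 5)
Your proof is correct and follows essentially the same route as the paper: both compute the slack entry-by-entry, use the identity $\phi(a)=2a-\mathds{1}$ (equivalently $\phi(a)^\intercal b = 2a^\intercal b - \mathds{1}^\intercal b$), and then check the $d$-dependent correction by cases on membership of $a$ and $b$ in $L$. Your explicit four-case tabulation is just a slightly more spelled-out version of the paper's case display.
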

 \begin{proof}
  Consider two vectors $a,b\in\{0,1\}^n.$ The slack of the inequality
corresponding to $\phi(a)$ with respect to 
  $(b,\psi(b))$ is 
  
  $$\left\{\begin{matrix} 
			  \mathds{1}^Ta+d-d\psi(b)-\phi(a)^Tb &
~~~~ \text{ if } a\in L \\
                          \mathds{1}^Ta
                          -d\psi(b)-\phi(a)^Tb~~~~~~ & ~~~~ \text{ if } a\notin L
          \end{matrix}\right.$$

Observing that $\phi(a)=a-(\mathds{1}-a)=2a-\mathds{1}$ we can see that
$\mathds{1}^Ta-
\phi(a)^Tb=\mathds{1}^Ta+\mathds{1}^Tb-2a^T
b.$ Therefore the slack of the 
inequality corresponding to $\phi(a)$ with respect to $(b,\psi(b))$ is 
$$\left\{\begin{matrix} 
  \mathds{1}^Ta+\mathds{1}^Tb-2a^T b +d  & ~~~~ \text{
if } a\in L,b\notin L\\
  \mathds{1}^Ta+\mathds{1}^Tb-2a^T b-d & ~~~~ \text{
if } a\notin L,b\in L\\
  \mathds{1}^Ta+\mathds{1}^Tb-2a^T b~~~~~ &
\text{otherwise}
        \end{matrix}\right.$$

  \end{proof}
The following lemma is analogous to Proposition \ref{optlemma2}. 
 
 \begin{lemma}
\label{lem:sandwich}
  Let $P$ be a polytope such that $V(n)\subseteq P \subseteq H(n,d).$ Then,
deciding whether a vector 
  $a\in\{0,1\}^n$ is in $L$ or not can be achieved by optimizing over $P$ along
the direction ($\phi(a),d)$.
 \end{lemma}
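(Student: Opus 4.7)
The plan is to sandwich the optimum of the LP $\max\{\phi(a)^\intercal x + dw : (x,w)\in P\}$ between matching bounds obtained from the inner polytope $V(L(n))$ and the outer polytope $H(L(n))$, and to read off membership of $a$ in $L$ from the precise value attained.

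First I would compute the optimum over the inner polytope. Since $V(L(n))=\conv\{(b,\psi(b)) : b\in\{0,1\}^n\}$, the maximum of any linear functional is attained at some $(b,\psi(b))$. Using $\phi(a)_i = 2a_i-1$ and $a,b\in\{0,1\}^n$, one computes
\[
\phi(a)^\intercal b = 2a^\intercal b - b^\intercal \mathds{1} \leqslant a^\intercal \mathds{1},
\]
with equality if and only if $b=a$ (since $a^\intercal b \leqslant \min(a^\intercal\mathds{1},b^\intercal\mathds{1})$, with simultaneous equality only when $a=b$). Hence the unique maximizer over $V(L(n))$ is $(a,\psi(a))$, giving value $a^\intercal \mathds{1} + d\,\psi(a)$. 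Because $V(L(n))\subseteq P$, the optimum $z^*$ over $P$ is at least this value.

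Next I would produce a matching upper bound from $H(L(n))$. The defining inequality of $H(L(n))$ indexed by $a$ itself is exactly
\[
\phi(a)^\intercal x + dw \leqslant a^\intercal\mathds{1} + d\cdot\mathbbm{1}[a\in L] \;=\; a^\intercal\mathds{1} + d\,\psi(a),
\]
so since $P\subseteq H(L(n))$, we also have $z^* \leqslant a^\intercal\mathds{1}+d\,\psi(a)$. Combining the two bounds gives the exact equality $z^* = a^\intercal\mathds{1} + d\,\psi(a)$.

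Finally, since $a^\intercal\mathds{1}$ is trivially computable from $a$ and $0 < d \leqslant 1/2$, the quantity $z^* - a^\intercal\mathds{1}$ is either $0$ or $d$, and these two cases correspond exactly to $a\notin L$ and $a\in L$ respectively. This decides membership by a single linear optimization over $P$ in direction $(\phi(a),d)$. There is no real obstacle here: the argument is essentially a reprise of Proposition~\ref{optlemma2} and the proof of Lemma~\ref{wef}, with the inner/outer polytope inclusions replacing the role played there by the $x$-$0/1$ property of the WEF.
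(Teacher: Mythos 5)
Your proof is correct and takes essentially the same route as the paper: both arguments sandwich the optimum over $P$ between the value $a^\intercal\mathds{1}+d\,\psi(a)$ attained at the point $(a,\psi(a))\in V(L(n))$ and the upper bound coming from the inequality of $H(L(n))$ indexed by $a$, and then read off membership from whether the extra $d$ appears in the optimal value. (Your aside that $(a,\psi(a))$ is the \emph{unique} maximizer over $V(L(n))$ would need the integrality gap $\phi(a)^\intercal b\leqslant a^\intercal\mathds{1}-1$ for $b\neq a$ together with $d\leqslant 1/2$, but this claim is not needed for your argument.)
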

 \begin{proof}
  Let $a$ be a given vector in $\{0,1\}^n.$ Consider the maxima $z_v, z_p, z_h$ of
$\phi(a)^T x + dw$ when 
  $(x,w)\in V(n)$, $(x,w)\in P,$ and 
$(x,w)\in H(n,d)$ respectively. Since $V(n)\subseteq P \subseteq H(n,d)$ we
have
  that $z_v\leqslant z_p\leqslant z_h.$ 

Since $(a,\psi(a)) \in V(n)$,
$z_v \ge \phi(a)^T a + d \psi(a)=\mathds{1}^Ta+d\psi(a)$.
Furthermore,  
$a$ gives rise to the inequality 
$$
 \phi(a)^T x + dw \leqslant \mathds{1}^Ta+d \psi(a)
$$
in $H(n,d)$ and so
$z_h \le \mathds{1}^Ta +d \psi(a)$. Since $z_v \le z_p \le z_h$, 
$z_v=z_p=z_h=\mathds{1}^Ta+d\psi(a)$. 

Therefore whether
$z_p=\mathds{1}^Ta+d$
  or not tells us whether $a\in L$ or not.
 \end{proof}

 \begin{theorem}  
Let $L$ be a language over the 0/1 alphabet and 
let $n$ be any positive integer. \\
(a) If
$L$ 
belongs to the class $\Ppoly$ then
there is a constant $0 < d < 1/2$ and slack matrix $M(n,d)$ such that both the 
the size of $d$ and
the non-negative rank of  $M(n,d)$ are
polynomial in $n$.  \\
(b) If there is a constant $0< d <1/2$ and slack matrix $M(n,d)$ such that both
the size of $d$ and 
the {\em succinct} non-negative rank of $M(n,d)$ are
polynomial in $n$ then $L$ belongs to the class $\Ppoly$.
  
\label{thm:sandwich}
 \end{theorem}
 \begin{proof}

(a) Suppose $L$ belongs to the class $\Ppoly$ then by
  Lemma \ref{01} there is a polytope $Q$ that is a WEF of 
$V(n)$ with size bounded by a polynomial in $n$. 
Let $P=\{(x,w):(x,w,s) \in Q \}$ be the projection of $Q$ onto its first $n+1$ coordinates.
Since $Q$ is a WEF of $V(n)$ we have $V(n) \subseteq P$.
Applying Proposition \ref{aboutd} to $Q$ we obtain a value $d_0$, whose size is 
polynomially bounded in $n$. Set $d=d_0/2.$
For any vector $a \in \{0,1\}^n$
this proposition implies that if we optimize $\phi(a)x+dw$ over $Q$
the maximum value obtained is $\mathds{1}^Ta+d\psi(a)$. Therefore the inequality corresponding
to this $a$ in the definition of $H(n,d)$ is satisfied for all $(x,w) \in P$. 
It follows that $P \subseteq H(n,d)$ and so $P$ lies in the polytopal sandwich defined
by $V(n)$ and $H(n,d)$.
Since $M(n,d)$ is the 
  slack of $H(n,d)$ with respect to $V(n)$, by Lemma 
  \ref{lem:nnrank_vs_sandwich}
  the non-negative rank of
$M(n,d)$ is upper bounded by the 
extension complexity of any polytope $P$ sandwiched between the two polytopes.
Since the size of $Q$ is bounded above by a polynomial in $n$
and $Q$ is the extension of 
  some polytope that can be sandwiched between $H(n,d)$ and $V(n)$ we have
that the non-negative rank of $M(n,d)$ is bounded by
  a polynomial in $n.$
  
(b) Suppose for some constant $0< d <1/2$ the slack matrix $M(n,d)$ has succinct non-negative rank $r$ 
and that both $d$ and $r$ are
polynomially bounded in $n$. 
Since $M(n,d)$ is the slack of 
  $H(n,d)$ with respect to $V(n),$ by Lemma \ref{lem:nnrank_vs_sandwich}
there exists a polytope $P$ such 
  that the extension complexity of $P$ is $r$ and $V(n)\subseteq P \subseteq
H(n,d).$ In other words, 
  there exist polytopes $P$ and $Q$ such that $Q$ has $r$ facets, projects
to $P$ and  
$V(n)\subseteq P \subseteq H(n,d)$. Furthermore, from the proof of  Lemma \ref{lem:nnrank_vs_sandwich}
we know there exists such a $Q$ whose description contains only numbers polynomially 
bounded in $n$ because the rank $r$ non-negative factorization is succinct by our assumption.
Using the given value of $d$, by Lemma \ref{lem:sandwich} we can optimize
over $P$ to 
  decide whether $x\in L(n)$ or not for a given $x.$ Furthermore, optimizing over
$P$ can be done by optimizing 
  over $Q$ instead, since $Q$ projects to $P$. Since $Q$ has polynomial bit complexity, we can use 
  interior point methods to do the optimization and so determine membership in
$L$ in polynomial time.
 \end{proof}

Theorem \ref{thm:sandwich}(a) in principle paves a way to show that membership 
testing in a certain language is not in $\Ppoly$ by showing that the non negative
rank of associated
slack matrices cannot be polynomially bounded.
Various techniques exist to show lower bounds for the non-negative rank of matrices and have been used to prove 
super-polynomial lower bounds for the extension complexity of important polytopes like the CUT polytope, 
the TSP polytope, and the Perfect Matching polytope of Edmonds, among others \cite{FMPTW15, Rothvoss17, AT15b, PV13}. 
Whether one can apply such techniques to show super-polynomial lower bounds on the non-negative rank 
of the slack of the characteristic sandwich of some language is left as an open problem.

\section{Concluding remarks}
\label{sect:conclusions}


In this paper we have given a method for constructing
polynomial size linear programs for solving decision problems in $\PP$
from their underlying algorithms.
As a concrete example we can construct an LP of size $O(n^6 \log n)$
that decides whether an input graph on $n$ vertices has a perfect matching
or not. This LP is based on an $O(n^4)$ implementation of Edmonds'
algorithm and smaller polytopes can be obtained by using more efficient
implementations. This raises the question of what is the smallest size LP
for the perfect matching problem and whether or not there
is an LP that has an explicit formulation. Along these lines we
mention a recent result in \cite{AT17a} for 2SAT. 
The standard formulation for this problem has superpolynomial
time extension complexity. Applying the method of Section \ref{sect:general}
to a standard 2SAT algorithm yields an LP of size $O(n^4 \log n)$.
However in \cite{AT17a} a simple compact WEF for 2SAT
of size $O(n^3)$ is given.

The discussion in this paper was centred around decision problems 
and one may wonder if it can be applied to optimization problems also.
Before addressing this let us recall some discussion
on the topic in Yannakakis' paper \cite{Ya91}:
\par
\begingroup
\leftskip=2cm
\rightskip\leftskip        
\noindent
Linear programming is complete for
decision problems in $\PP$; the $\PP = \NP ?$ question is equivalent to a weaker requirement
of the LP (than that expressing the TSP polytope), {\em in some sense reflecting the
difference between decision and optimization problems.} (P. 445, emphasis ours)
\par
\endgroup
\vspace{0.5cm}
\noindent
The term ``expressing'' in the citation refers to an extended formulation of polynomial size. 
The method we have described can indeed be used to construct polynomial sized LPs
for optimization problems which have polynomial time algorithms.
Consider, for example, the problem of finding a maximum weight matching for
the complete graph $K_n$,
where the edge weights are integers of length $W$ bits.
For simplicity we assume the weights are non-negative, but a small extension would handle
the general case.
We construct a polytope $P$ as defined by (\ref{CH2}) as follows.
The binary vectors $x$ have length $Wn(n-1)/2 + W +\lceil \log_2 n \rceil $, where the first $Wn(n-1)/2$
bits encode the edge weights and the remaining bits encode an integer $k, 0 \le k < 2^W n$. 
The bit $w_x$ is defined by setting $w_x=1$ whenever the edge weights specified in $x$
admit a matching of weight $k$ or greater and $w_x=0$ otherwise.
Note that the unweighted maximum matching problem for graphs on $n$ vertices
fits into this framework by setting $W=1$.

Applying the method of Section \ref{sect:general} to the weighted version of
Edmonds' algorithm we obtain a polynomial sized WEF $Q$ for $P$. We can
decide by solving a single LP over $Q$ if a given
weighted $K_n$ has a matching of weight at least $k$, for any fixed $k$: 
the last $ W+\lceil \log_2 n \rceil $ coefficients of the objective function (\ref{c2})
vary depending on $k$. 
Therefore, by binary search on $k$ we can solve the maximum weight matching
problem for a given input by optimizing $O(W + \log n)$ times over $Q$ with objective
functions depending on $k$.
We do not, however, know how to solve the weighted matching problem by means of a single
polynomially sized linear program.

\section*{Acknowledgments}
We would like to thank the referees of 
this paper for their detailed comments leading to several important improvements.
This research is supported by 
a Grant-in-Aid for Scientific Research on Innovative Areas -- 
Exploring the Limits of Computation(ELC), MEXT, Japan.
Research of the second author is partially supported by NSERC Canada.
Research of the third author
is supported by project GA15-11559S of GA \v{C}R.

\bibliography{perfect}
\begin{appendices}
\section{Valid inequalities and facets of $\PM_{n}$}
\label{facets}
We give here two classes of valid inequalities for $\PM_n$. Firstly, let $M \subseteq E$  define a perfect matching
in $G$. We have:
\begin{equation}
\label{matchfacet}
w \geqslant \sum_{ij \in M} x_{ij} - \frac{n}{2} + 1
\end{equation}
To see the validity of this inequality, note that if $M$ is a perfect matching the sum becomes $n/2$ and
the inequality states that $w \geqslant 1$, i.e. $G(x)$ has a perfect matching.
We show below
that the each inequality of type (\ref{matchfacet}) define a facet of $\PM_n$.
Since the number of perfect matchings in $K_n$ is the double
factorial $ (n-1)!!=(n-1) \cdot (n-3)...3 \cdot 1$ the number of facet defining
inequalities of $\PM_n$ is therefore super-polynomial.

For a second set of valid inequalities, first let $E_n$ be the set of edges of $K_n$.
A proper subset $S \subset E_n$ is {\em hypo-matchable} if it has no 
matching of size 
$n/2$ but the addition of any other edge from $E_n \setminus S$ to $S$ yields such a matching.
Then we have:
\begin{equation}
\label{hypofacet}
w \leqslant \sum_{ij \in E_n \setminus S} x_{ij}
\end{equation}
To see the validity of this inequality note that if the sum is zero then no edges from $E_n \setminus S$
are in $G(x)$. So $G(x)$ has no perfect matching and so $w$ must be zero.

We now prove that the inequalities (\ref{matchfacet}) are facet defining
for $\PM_{2n}$, where we have replaced $n$ by $2n$ to avoid fractions.
For any integer $s$ we use the notations 
$I_{s \times s}, \mathds{1}_{s \times s}$ and 
$\mathds{O}_{s \times s}$ to represent, respectively, the $s \times s$ identity matrix,
matrix of all ones, and matrix of all zeroes. With only one subscript, the latter two
notations represent the corresponding vectors. For an integer $n$ we let
$t=2n(n-1)$. 
Without loss of generality,
consider a perfect matching $M$ in $K_{2n}$ consisting of the $n$ edges
$12, 34, 56, ..., (2n-1)2n$ and let $E_t$ be the $t$ edges of $K_{2n}$ that are not in
$M$. We construct a set of $t+n=n(2n-1)$ graphs $G(x)$ for which inequality  (\ref{matchfacet})
is tight and for which the $x$ vectors are affinely independent. The corresponding 
$(t+n+1) \times (t+n+1)$-matrix $A$
of edge vectors $x$ is:

\begin{equation}
A =\left [ 
\begin{array}{c|c|c}
I_{t \times t} & \mathds{1}_{t \times n} & \mathds{1}_t \\ 
\cline{1-3}
 \mathds{O}_{n \times t} & \mathds{1}_{n \times n} - I_{n \times n} &  \mathds{O}_{n} \\ 
\cline{1-3}
\mathds{O}_{t} & \mathds{1}_{n}  & 1 
\end{array}
\right ]
\end{equation}
We label the columns of $A$ as follows. The first $t$ columns correspond to the edges
in $E_t$
listed in lexicographical order by $ij$.
The next $n$ columns are indexed by the edges $12, 34, ... , (2n-1)2n$
of $M$ and the final column by $w$. 
The first $t$ rows of $A$ consist of the edge vectors of graphs which contain $M$ and
precisely one other edge $ij$ not in $M$, arranged in lexicographic order by $ij$. 
This means that the top left hand block in $A$ is the
identity matrix. 
Since all these graphs
contain $M$, 
which is a perfect matching, all these remaining entries in the first $t$ rows of $A$ are ones.

The next $n$ rows of $A$ correspond to graphs with edge vectors $M \setminus \{ij\}$,
where $ij$ ranges over the perfect matching $12, 24, ... , (2n-1)2n$. Clearly
the first block of these rows are all zeroes and
the second block is $\mathds{1}_{n \times n} - I_{n \times n}$. 
The last column is all zero since none of these graphs has a perfect matching.
The final row of $A$ corresponds to the graph $M$.

It is straight forward to perform row operations on $A$ to transform it into
an upper triangular matrix with $\pm 1$ on the main diagonal. 
This can be performed by subtracting the last row from the preceding $n$ rows.
The middle block of $A$ is now $-I_{n \times n}$.
Finally these rows can then be added to the last row, which is then divided by $n-1$.
It is then all zero except for the last column, which is -1. This completes the proof.

\end{appendices}
\end{document}